\documentclass[
    aps,
    prb,
    reprint,
    superscriptaddress,
    nofootinbib,
    longbibliography,
    floatfix
]{revtex4-2}

\usepackage{graphicx}
\usepackage{bm}
\usepackage{amsmath}
\usepackage{amssymb}
\usepackage{amsthm}
\usepackage{bbm}
\usepackage{mathtools}
\usepackage[stretch=50,shrink=50,step=1]{microtype}
\usepackage{xcolor}
\usepackage[colorlinks=true,linkcolor=blue,citecolor=blue,urlcolor=blue]{hyperref}

\newtheorem{lemma}{Lemma}

\newcommand{\ket}[1]{\left|#1\right\rangle}

\newcommand{\orcid}[1]{}

\begin{document}

\title{Exact solution of a boundary-driven transverse-field Ising model with hidden time-reversal symmetry}

\author{Xudong Liu\orcid{0000-0002-4605-7120}}
\email{liuxudong@iphy.ac.cn}
\affiliation{Beijing National Laboratory for Condensed Matter Physics, Institute of Physics, Chinese Academy of Sciences, Beijing 100190, China}
\affiliation{School of Physical Sciences, University of Chinese Academy of Sciences, Beijing 100049, China}

\author{Shu Chen\orcid{0000-0003-2605-6128}}
\email{schen@iphy.ac.cn}
\affiliation{Beijing National Laboratory for Condensed Matter Physics, Institute of Physics, Chinese Academy of Sciences, Beijing 100190, China}
\affiliation{School of Physical Sciences, University of Chinese Academy of Sciences, Beijing 100049, China}

\date{\today}

\begin{abstract}
The dissipative transverse-field Ising (TFI) model provides a paradigmatic setting for nonequilibrium quantum many-body physics. We show that a class of boundary-driven TFI models subject to dissipation at only one boundary possesses hidden time-reversal symmetry, which enables an exact construction of their nonequilibrium steady states. The solution admits a matrix-product representation and defines a nonequilibrium partition function from which steady-state observables can be evaluated efficiently. We use the exact solution to characterize the microscopic structure of the steady state
through its $z$--magnetization and two-point correlations. A striking feature is that the local field at the dissipative boundary governs the spatial organization of the steady state throughout the chain. The steady state typically exhibits boundary-localized magnetization profiles and exponentially decaying correlations, whose characteristic length scales are set by the dissipative boundary. In the weak-driving limit, suitably tuned boundary fields can reorganize the NESS into a delocalized single-interface structure, giving rise to long-range correlations that decay linearly with distance.
\end{abstract}

\keywords{transverse-field Ising model, dissipative systems, non-equilibrium steady state, Lindblad master equation, quantum detailed balance, hidden time-reversal symmetry, entangled state}

\maketitle

\section{Introduction}
Open quantum systems have attracted broad interest in recent years.
From a fundamental perspective, realistic quantum systems are inevitably
coupled to their environments, and the interplay between coherent Hamiltonian
evolution and quantum dissipation gives rise to rich nonequilibrium phenomena~\cite{Diehl2008,Garrahan2010,Diehl2016,Rossini2021,Poletti_RMP}.
From a practical perspective, engineered dissipation makes it possible to
control quantum dynamics and prepare entangled quantum states~\cite{Zoller1996,WhiteNoise,entangle2013,Clerk_PRB2022,Clerk_PRX2024,CQA,Zoller}, with applications in quantum information processing~\cite{QI_1,QI_2,QI_3,QI_4}.
Despite these distinct motivations, the nonequilibrium steady state (NESS)
is a central object of interest.
However, since the dimension of the Hilbert space increases exponentially with the system size, determining the steady states of open quantum systems is generally a formidable numerical task.
Exact steady states are therefore particularly valuable.

Even for Markovian open quantum systems described by Lindblad master equations~\cite{Lindblad,book_1,lidar}, exact solutions remain rare.
A major class of exactly tractable models relies on a closed hierarchy of
equations of motion for correlation functions. Typical examples include quasi-free fermionic and bosonic systems which can be solved by third quantization~\cite{prosen2008,prosen2010JPA,prosen2010JSM,poletti2018,poletti2018_2}, as well as certain quartic Lindbladian systems subject to specific constraints~\cite{clerk_weaksymm,ZYK,Wang}.
Another class of exact solutions relies on quantum integrability. The Yang--Baxter structure contains an additional degree of freedom that
gives rise to nonunitary representations of quantum groups.
These representations make it possible to construct exact steady states in matrix-product-operator (MPO) form for a set of boundary-driven integrable models~\cite{prosen2011xxz,prosen_mps,popkov2013xxz,
prosen2014hubbard,Ilievski2017,Ilievski2014,prosen2020xyz,prosen2020xyz2,prosen2022xyz,prosen2025xxz}.
These exact steady states have revealed new conserved quantities and played a central role in understanding nonequilibrium transport and dissipative phase transitions~\cite{prosen2013quasilocal,prosen2011drude,clerk_xxz}.

A conceptually distinct route to exact NESS is provided by hidden time-reversal symmetry (HTRS)~\cite{clerk_prxq}. This symmetry is closely related to the Kubo--Martin--Schwinger (KMS) detailed balance~\cite{CarlenMaas2017,GQDB,SQDB}. In contrast to the well-known Gelfand-Naimark-Segal (GNS) quantum detailed balance~\cite{Agarwal1973,Alicki1976,PhysicsKnight}, systems with HTRS do not require Onsager time symmetry to hold for all correlators. Consequently, they are allowed to have nontrivial steady states that do not commute with the Hamiltonian. Moreover, HTRS need not rely on the rotating-wave approximation~\cite{Alhambra}, extending its applicability to a broader class of quantum many body systems. 
Beyond the formal generalization, HTRS imposes powerful constraints on the Lindbladian that can be exploited to find exact solutions. This approach
may be viewed as a generalization of the coherent quantum absorber (CQA)
method~\cite{CQA} and has already yielded exact steady states in several
dissipative quantum systems~\cite{clerk_kerr,clerk_ising,clerk_bose,clerk_xxz,clerk_fermion}.

In this work, we present an exact solution for a boundary-driven TFI model with two-sided longitudinal fields and single-sided dissipation. In contrast to the commonly solvable dissipative TFI models~\cite{prosen2008,ZhengZY,Naoyuki}, the equations of motion for
$k$-point correlation functions do not form a closed hierarchy, rendering
third quantization inapplicable. Nevertheless, we show that the model
possesses HTRS, which allows us to construct an exactly solvable doubled
system consisting of two TFI chains coupled through a unidirectional
waveguide [see Fig.~\ref{fig1}(a)]. The NESS of the original system is then
obtained from the pure steady state of the doubled system.

We further derive an MPO representation of the NESS and formulate a
nonequilibrium partition function, which enables physical observables to be
evaluated efficiently using the transfer operator of an effective classical
Markov process. We use this framework to investigate the steady-state
$z$-magnetization and connected two-point correlations under different
boundary-field configurations. We find that the spatial structure of the
NESS depends sensitively on the boundary fields, with the strength of the
longitudinal field at the dissipative boundary playing a particularly
important role throughout the chain. For generic parameters, the
magnetization is exponentially localized near the boundaries and the
connected correlations decay exponentially with distance. Both the
magnetization localization length and the correlation decay rate are
governed by the strength of the dissipative boundary field. In the weak-driving
limit, as this field strength approaches the Ising interaction strength,
the magnetization localization length diverges, accompanied by a pronounced
change in the correlation. 
Moreover, for suitable combinations of the two boundary-field strengths,
the purification of the NESS becomes dominated by configurations containing
a single delocalized interface separating two regions, resulting in a linear magnetization profile and long-range correlations that decay linearly with distance.

The remainder of this paper is organized as follows. In Sec.~\ref{Sec2}, we introduce the boundary-driven TFI model and the basic concepts of HTRS. In Sec.~\ref{Sec3}, we show how this symmetry enables the exact solution of the NESS and then derive its matrix-product representation. In Sec.~\ref{Sec4}, we investigate how the field at the dissipative boundary controls the spatial structure of the NESS through the magnetization profiles and connected two-point correlations. 
In Sec.~\ref{Sec5}, we give a summary.
Technical derivations and proofs of the main results are provided in the Appendices.

\section{Models and methods}\label{Sec2}
\subsection{Boundary-driven TFI model}
We consider a 1-D chain of $N$ qubits with nearest-neighbor $ZZ$ interactions and a transverse field, subject to longitudinal boundary fields and gain at site $N$. The dynamics of this system are governed by the Lindblad master equation
\begin{equation}
\frac{d\rho}{dt}=\mathcal{L}\rho=-i[H,\rho]+\mathcal{D}[\sqrt{\gamma}\sigma^{+}_{N}]\rho,
\label{model}
\end{equation}
where $H=H_{\text{TFI}}+H_l$ comprises the transverse-field Ising part and the boundary terms:
\begin{align}
&H_{\text{TFI}}=\sum^{N-1}_{j=1}\sigma^{z}_{j}\sigma^{z}_{j+1}+h\sum^{N}_{j=1}\sigma^{x}_{j},\\
&H_l=h_L\sigma^{z}_{1}+h_R\sigma^{z}_{N}.
\end{align}
The dissipator is given by $\mathcal{D}[L]\rho=2L\rho L^{\dagger}-\left\{L^{\dagger}L,\rho \right\}$. These ingredients are compatible with programmable atomic platforms, such as Rydberg-atom arrays. 

It should be noted that Eq.~\eqref{model} differs from the boundary driven TFI models studied in Refs.~\cite{prosen2008,ZhengZY,Naoyuki}, which can be solved via third quantization. Under the spin rotation
$
U=\prod_{j=1}^{N} e^{-i\frac{\pi}{4}\sigma_j^{y}},
$
the jump operators $\sigma^+$ are transformed into $\frac{1}{2}(\sigma^z + i\sigma^y)$. After the Jordan--Wigner transformation to the Majorana fermion representation, it becomes clear that the resulting jump operators are neither purely linear nor purely quadratic, but rather a combination of both. Consequently, the presence of terms such as $\sigma^{y}\rho\sigma^{z}$ in the Lindblad equation breaks the closure of correlation functions, making the solution difficult to obtain.

\subsection{Hidden time-reversal symmetry}
In this work, we focus on the NESS of Eq.~\eqref{model}, i.e., density matrices $\rho_{\rm{SS}}$ satisfying $\mathcal{L}\rho_{\rm{SS}}=0$. 
In Appendix~\ref{appA}, we proved that the NESS is unique and full rank.

A route to exact steady states is provided by the KMS quantum detailed balance condition~\cite{clerk_prxq,GQDB,SQDB}, which can be formulated in terms of a hidden time-reversal symmetry (HTRS) in the doubled Hilbert space~\cite{clerk_prxq}.
To formulate the HTRS, we introduce the mirrored copy $\mathcal{H}_B$ of the physical Hilbert space $\mathcal{H}_A$. Given a time-reversal operator $T$, a purification of the NESS can be written as
\begin{equation}
    \rho_{\rm SS}
    =\operatorname{Tr}_{B}\!\left(|\Psi_T\rangle\langle\Psi_T|\right),
    \qquad
    |\Psi_T\rangle
    =\sum_n\sqrt{p_n}\,|n\rangle_A T|n\rangle_B,
\label{ness_psiT}
\end{equation}
where $p_n$ and $|n\rangle$ are the eigenvalues and eigenvectors of
$\rho_{\rm SS}$, respectively.
The HTRS holds if $T$ can be chosen such that
\begin{equation}
    \langle\Psi_T|X_A(t)Y_B|\Psi_T\rangle
    =
    \langle\Psi_T|Y_A(t)X_B|\Psi_T\rangle
\label{hTRS}
\end{equation}
for arbitrary operators $X$ and $Y$.
Here, $X_A=X\otimes\mathbb{I}_B$ and $X_B=\mathbb{I}_A\otimes X$, with the time evolution generated by $\mathcal{L}\otimes\mathbb{I}_B$.
An equivalent formulation, which does not rely on the auxiliary system, reads
\begin{equation}
\mathrm{Tr}\!\left(X(t)\,\mathcal{J}[Y]\,\rho_{\mathrm{SS}}\right)
=
\mathrm{Tr}\!\left(Y(t)\,\mathcal{J}[X]\,\rho_{\mathrm{SS}}\right),
\label{hTRS2}
\end{equation}
for arbitrary observables $X$ and $Y$. Here, the exchange superoperator is defined as
$
\mathcal{J}[O]
=
\rho_{\mathrm{SS}}^{1/2}\, T O^\dagger T^{-1}\, \rho_{\mathrm{SS}}^{-1/2},
$
which satisfies
$
\mathcal{J}[O]_B|\Psi_T\rangle = O_A|\Psi_T\rangle.
$
Compared with the well-known GNS quantum detailed balance condition, $\mathrm{Tr}\!\left(X^{\dagger}(t)Y\rho_{\mathrm{SS}}\right)=\mathrm{Tr}\!\left(Y^{\dagger}(t)X\rho_{\mathrm{SS}}\right)$, Eq.~\eqref{hTRS2} is less restrictive and therefore admits nontrivial NESS satisfying $[\rho_{\mathrm{SS}},H]\neq0$. Moreover, for $\mathcal{J}$-invariant operators, such as the effective Hamiltonian $H_{\mathrm{eff}}$ and the jump operator $L$, Eq.~\eqref{hTRS2} reduces to the standard Onsager symmetry,
\begin{equation}
\mathrm{Tr}\!\left(X(t)Y\rho_{\mathrm{SS}}\right)=\mathrm{Tr}\!\left(Y(t)X\rho_{\mathrm{SS}}\right).
\label{standard Onsager symmetry}
\end{equation}
Fig.~\ref{fig1}(b) demonstrates this behavior explicitly for Lindbladian~\eqref{model}: $\mathcal{J}$-invariant operators obey the symmetry in Eq.~\eqref{standard Onsager symmetry}, while generic Hermitian operators generally do not. This provides a clear signature of the HTRS and demonstrates that Lindbladian~\eqref{model} does not satisfy the conventional GNS detailed balance condition.

Since the explicit form of $T$ is generally unknown, Eq.~\eqref{hTRS} is difficult to apply directly to solve NESS. Instead, it is more convenient to use a sufficient condition for HTRS. For Lindbladian~\eqref{model}, this condition can be shown to take the form \cite{clerk_prxq}
\begin{align}
    &H_{AB}|\Psi_{T}\rangle=0,\label{hTRS_1}\\
    &(\sigma^{+}_{B,N}-\sigma^{+}_{A,N})|\Psi_{T}\rangle=0,\label{hTRS_right}
\end{align}
where 
\begin{equation}
H_{AB}=H_A-H_B-i\gamma\left(\sigma^{-}_{A,N}\sigma^{+}_{B,N}-\text{h.c.}\right).
\label{H_{AB}}
\end{equation}
Here, $H_A$ and $H_B$ represent the copies of Hamiltonian $H$ acting on the respective subspaces, and $\sigma^{+}_{A,N}$ and $\sigma^{+}_{B,N}$ are similar.
In other words, if the doubled system 
\begin{equation}
    \mathcal{L}_{AB}=-i[H_{AB},\cdot]+\gamma\mathcal{D}[\sigma^{+}_{N,A}-\sigma^{+}_{N,B}]
\label{doubled_system}
\end{equation}
has a pure state as its unique NESS \cite{Zoller}, this pure state is the purification of the NESS of the physical subsystem. Eq.~\eqref{doubled_system} describes the dissipation in system $A$ absorbed by the mirror system $B$ via a chiral waveguide [see Fig.~\ref{fig1}(a)].

\begin{figure}[t]
 \centering
        \includegraphics[width=\linewidth]{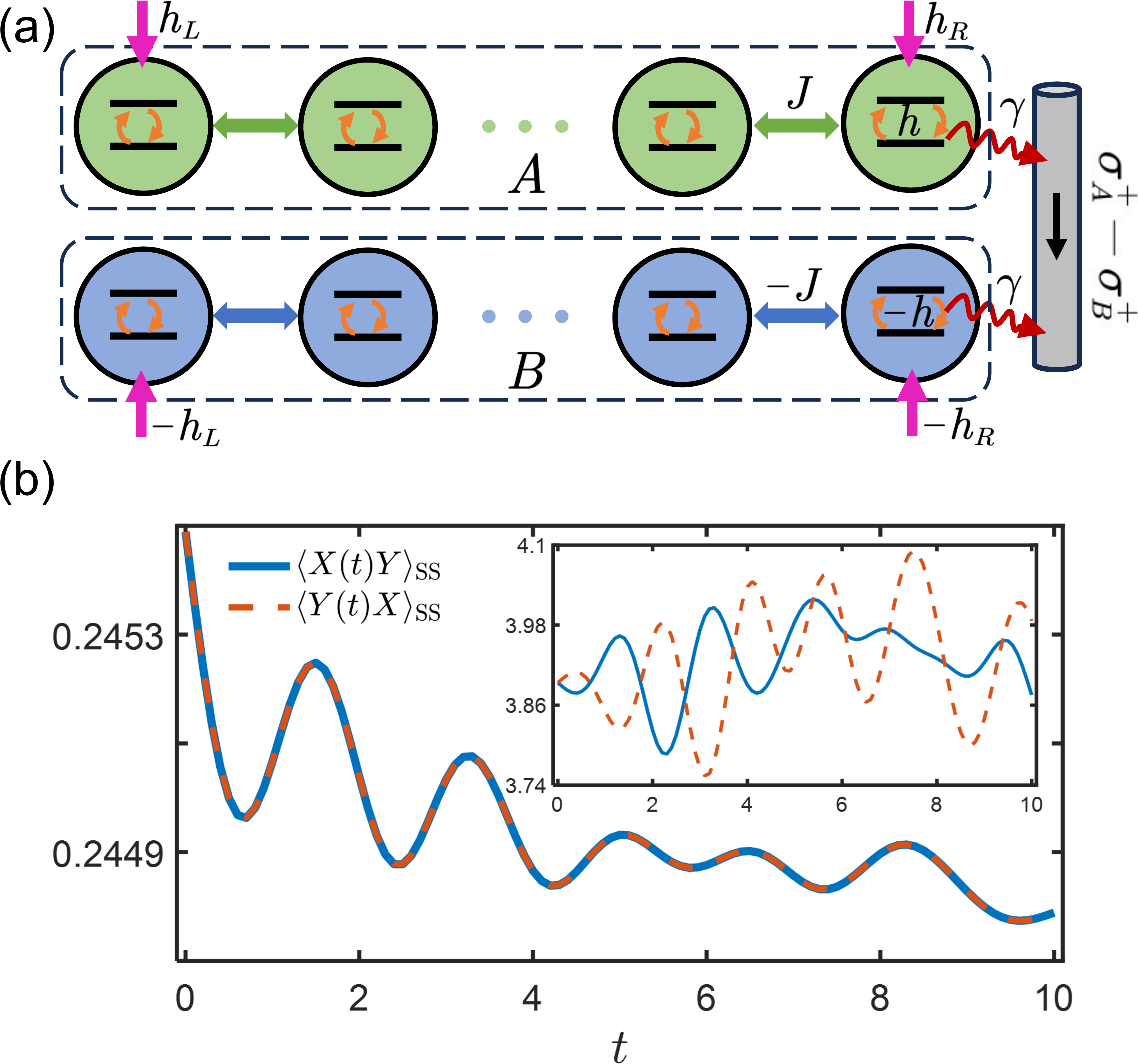}
 \caption{(a) Schematic of the doubled system for the boundary-driven TFI model. The physical system $A$ consists of $N$ spins with longitudinal boundary fields and incoherent gain at site $N$ (right). The mirror system $B$ acts as an absorber and is coupled to $A$ via a unidirectional waveguide.
 (b) Features of HTRS: two time correlators for a small chain $N=5$ with parameters $h=0.5$, $h_L=h_R=0.5$ and $\gamma=0.1$. It shows a symmetric correlation between $X=H_{\rm{eff}}$ and $Y=\sqrt{\gamma}\sigma^{+}_{N}$. This symmetry does not hold for general operators (inset: $X=\sigma^{z}_{1}$ and $Y=\sigma^{y}_{2}+5\sigma^{z}_{4}$).}
\label{fig1}
\end{figure}

\section{Exact nonequilibrium steady state}
\label{Sec3}

In this section, we show how to construct the NESS satisfying Eqs.~\eqref{hTRS_1}--\eqref{hTRS_right}. Inspired by the exact NESS solutions of boundary driven XXZ models \cite{prosen_mps, clerk_xxz}, we propose the following ansatz:
\begin{equation}
    |\Psi_T\rangle={\underset{S_i\in\{0,1,t\}}{\sum}}
    \text{Coeff}(S_1S_2\cdots S_{N-1}S_N)
    |S_1S_2\cdots S_{N-1} S_{N}\rangle,
\label{MPA}
\end{equation}
where $\text{Coeff}(S_1S_2\cdots S_N)$ denotes the coefficient associated with a given configuration of local states, to be determined by the HTRS constraints. $S_i$ denotes the coupling state of subsystems A and B at the site $i$. The local basis $|1\rangle, |0\rangle, |s\rangle, |t\rangle$ are defined as 
$|\uparrow_A\uparrow_B\rangle$, $|\downarrow_A\downarrow_B\rangle$, $\frac{1}{\sqrt{2}}(|\uparrow_A\downarrow_B\rangle-|\downarrow_A\uparrow_B\rangle)$, $\frac{1}{\sqrt{2}}(|\uparrow_A\downarrow_B\rangle+|\downarrow_A\uparrow_B\rangle)$ 
respectively. 
% In matrix language, they correspond to $\left\{\sigma^+,\sigma^-,\sigma^z/\sqrt{2},\sigma^0/\sqrt{2} \right\}$ with $\sigma^0$ denoting the identity. 

It should be noted that, in the ansatz \eqref{MPA}, the local states $|S_i\rangle$ are restricted to $\{|0\rangle,|1\rangle,|t\rangle\}$ and do not include the spin singlet state $|s\rangle$. Acting with $H_{AB}$ on any product state $|S_1 S_2 \cdots S_N\rangle$ ($S_i \in \{0,1,t\}$) generates a linear combination of product states in which at most one site is in $|s\rangle$. These generated $|s\rangle$ play a crucial role in solving Eq.~\eqref{hTRS_1}, as will be shown below. This construction is closely related to the so-called ``isolated defect operator" (IDO) approach \cite{prosen_mps}.

We first analyze how Eq.~\eqref{hTRS_right} constrains the steady state. For the $N$-th site, it is straightforward to obtain:
\begin{equation}
    (\sigma^{+}_{B}-\sigma^{+}_{A})|0\rangle=-\sqrt{2}|s\rangle,\;
    (\sigma^{+}_{B}-\sigma^{+}_{A})|1\rangle=
    (\sigma^{+}_{B}-\sigma^{+}_{A})|t\rangle=0.
\end{equation}
Since our ansatz~\eqref{MPA} does not include $|s\rangle$, Eq.~\eqref{hTRS_right} requires that the $N$-th site cannot be in $|0\rangle$. Physically speaking, this result is due to the requirement of dissipation for the polarization of the boundary spin.

To find $\text{Coeff}(S_1S_2\cdots S_N)$ in the ansatz~\eqref{MPA}, we then examine the action of $H_{AB}$ on $|S_1\cdots S_N\rangle$. Note that $H_{AB}$ can be divided into two-body terms, transverse-field terms, and boundary driven terms:
\begin{equation}
    \hat{H}_{AB}=\sum^{N-1}_{j=1} \hat{h}^{j,j+1}_{\rm{Ising}} +
                \sum^{N}_{j=1}\hat{h}^{j}_{\rm{tf}} + \hat{h}_{\rm{left}} + \hat{h}_{\rm{right}},
\end{equation}
where $\hat{h}^{j,j+1}_{\rm{Ising}}=\sigma^{z}_{A,j}\sigma^{z}_{A,j+1}-\sigma^{z}_{B,j}\sigma^{z}_{B,j+1}$, $\hat{h}^{j}_{\rm{tf}}=h\left(\sigma^{x}_{A,j}-\sigma^{x}_{B,j}\right)$, $\hat{h}_{\rm{left}}=h_L\left(\sigma^{z}_{A,1}-\sigma^{z}_{B,1}\right)$, and $\hat{h}_{\rm{right}}=h_R\left(\sigma^{z}_{N,A}-\sigma^{z}_{N,B}\right)-i\gamma\left(\sigma^{-}_{A,N}\sigma^{+}_{B,N}-\text{h.c.}\right)$.The non-vanishing results obtained by applying these local terms to the local basis are as follows:
\begin{equation}
\begin{aligned}
    &\hat{h}^{j,j+1}_{\text{Ising}}\ket{1_jt_{j+1}}=2\ket{1_js_{j+1}},\;
    \hat{h}^{j,j+1}_{\text{Ising}}\ket{t_j1_{j+1}}=2\ket{s_j1_{j+1}},
    \\
    &\hat{h}^{j,j+1}_{\text{Ising}}\ket{0_jt_{j+1}}=-2\ket{0_js_{j+1}},\quad
    \hat{h}^{j}_{\rm{tf}}\ket{0_j}=\sqrt{2}h\ket{s_j},\\
    &\hat{h}^{j,j+1}_{\text{Ising}}\ket{t_j0_{j+1}}=-2\ket{s_j0_{j+1}},\quad
    \;\hat{h}^{j}_{\rm tf}\ket{1_j}=-\sqrt{2}h\ket{s_j},\\
    &\hat{h}_{\rm left}\ket{t_1}=2h_L\ket{s_1},\qquad
    \hat{h}_{\rm right}\ket{t_N}=(i\gamma+2h_R)\ket{s_N}.
\label{localterm}
\end{aligned}
\end{equation}
It is straightforward to see that acting with $H_{AB}$ on any allowed configuration 
$|S_1S_2\cdots S_N\rangle$ in the ansatz~\eqref{MPA} generates a superposition of states 
with a single local singlet, 
$|S_1\cdots S_{j-1}s_j S_{j+1}\cdots S_N\rangle$. 
Consequently, Eq.~\eqref{hTRS_1} is equivalent to requiring that the sum of the coefficients in front of every single-$s$ state vanishes in $H_{AB}|\Psi_T\rangle$. 
Substituting Eqs.~\eqref{localterm} into the left hand side of Eq.~\eqref{hTRS_1}, the vanishing of singlets in the bulk yields the following coefficient relations:

\begin{equation}
\begin{aligned}
    &\text{Coeff}(\dots\alpha_{j-1}0_j \beta_{j+1}\dots )-\;\text{Coeff}(\dots \alpha_{j-1}1_j \beta_{j+1}\dots)\\
    &=\quad\frac{\zeta^{\alpha}_{\beta}}{\sqrt{2}h}\;\text{Coeff}(\dots\alpha_{j-1}t_j \beta_{j+1}\dots )
\end{aligned}
\label{specific_constraints_0}
\end{equation}
Here, $\dots$ denotes that in each term of the equation all sites except $j-1$, $j$ and $j+1$ have identical states. The indices $\alpha,\beta\in\{0,1,t\}$ label the neighboring local states, and $\zeta^\alpha_\beta$ are structure constants with the only nonzero elements
\begin{equation}
    \zeta^{0}_{t}=-\zeta^{1}_{t}=2,\quad \zeta^{0}_{0}=-\zeta^{1}_{1}=4,\quad\zeta^{\alpha}_{\beta}=\zeta^{\beta}_{\alpha}.
\label{parameter_bulk}
\end{equation}
The range of $j$ depends on the neighboring states appearing in Eq.~\eqref{specific_constraints_0}. 
For terms with $\beta=0$, $2\le j\le N-2$ due to the boundary constraint imposed by the right dissipator at site $N$, while for $\beta=1$ the allowed range is $2\le j\le N-1$.
Similarly, the vanishing of singlets at the boundary gives additional relations among the coefficients:
\begin{equation}
\begin{aligned}
    & \text{Coeff}(1_{1}\alpha_{2}\dots)-\text{Coeff}(0_{1}\alpha_{2}\dots)
    =\frac{\zeta^{\alpha}_{L}}{\sqrt{2}h}\text{Coeff}(t_{1}\alpha_{2}\dots),\\
    & \text{Coeff}(\dots\alpha_{N-1}1_{N})=\frac{\zeta^{\alpha}_{R}}{\sqrt{2}h}\text{Coeff}(\dots\alpha_{N-1}t_{N}),\label{specific_constraints_boundary}
\end{aligned}
\end{equation}
where $\alpha\in\{0,1,t\}$ and the boundary structure constants are
\begin{equation}
\begin{aligned}
&\zeta_L^{0}=2h_L-2,\;
\zeta_L^{1}=2h_L+2,\;
\zeta_R^{t}=i\gamma+2h_R\\
&\zeta_L^{t}=2h_L,\;
\zeta_R^{0}=i\gamma+2h_R-2,\;
\zeta_R^{1}=i\gamma+2h_R+2.
\label{parameter_boundary}
\end{aligned}
\end{equation}

Eqs.~\eqref{specific_constraints_0}-\eqref{parameter_boundary} determine the coefficients in the ansatz~\eqref{MPA} up to an overall normalization. 
For convenience, we choose the normalization convention
$
\text{Coeff}(tt\cdots tt)=1 .
$
To state the solution compactly, we introduce the following two definitions.

\textit{Definition} 1 (domain).
For a configuration $\ket{S_1S_2\cdots S_N}$ with $S_j\in\{0,1,t\}$, we say that it has a length-$l$ domain on sites $k,k+1,\ldots,k+l-1$ if
\[
S_j\neq t,\qquad j=k,k+1,\ldots,k+l-1,
\]
and the neighboring sites, if present, are in $\ket{t}$, i.e.,
\[
S_{k-1}=t\quad (k>1),\qquad S_{k+l}=t\quad (k+l\le N).
\]
For a domain with $l<N$, we call it a left-boundary domain if it contains site $1$, a right-boundary domain if it contains site $N$, and a bulk domain otherwise. In particular, we say $\ket{tt\cdots t}$ has a length-zero domain.

\textit{Definition} 2 (bond).
For a configuration $\ket{S_1S_2\cdots S_N}$ with $S_j\in\{0,1,t\}$, we say that it has a \textit{type-I bond} between sites $j$ and $j+1$ if $(S_j,S_{j+1})=(0,0)$ or $(1,1)$, and a \textit{type-II bond} between these sites if $(S_j,S_{j+1})=(0,1)$ or $(1,0)$.

For example, the configuration $\ket{1t110t11}$ has a length-1 left-boundary domain, a length-3 bulk domain, and a length-2 right-boundary domain. It contains two type-I bonds and one type-II bond in total. Fig.~\ref{fig2} (a) provides an intuitive illustration of the domain and bond structure.

With the two definitions above, Eqs.~\eqref{specific_constraints_0}-\eqref{parameter_boundary} can be solved in closed form.
The resulting coefficient takes a domain-factorized form: the weight of
each configuration in the ansatz~\eqref{MPA} is fixed solely by its domain
structure and bond content (see Appendix~\ref{appB} for the proof). Explicitly,
\begin{equation}
\mathrm{Coeff}(S_1S_2\cdots S_N)
=
\lambda_{\rm I}^{N_{\rm I}}
\lambda_{\rm II}^{N_{\rm II}}
\prod_j c_j ,
\label{coeff}
\end{equation}
where $j$ labels the domains in the configuration
$\ket{S_1S_2\cdots S_N}$, and $N_{\rm I}$ and $N_{\rm II}$ denote the
total numbers of type-I and type-II bonds, respectively. The corresponding
bond weights are
\begin{equation}
\lambda_{\rm I}
=
\frac{i\gamma+2h_R+2}{2\sqrt{2}h},
\qquad
\lambda_{\rm II}
=
\frac{i\gamma+2h_R-2}{2\sqrt{2}h}.
\label{bond_weight}
\end{equation}
The domain type factor $c_j$ depends only on the type of the domain. It is useful
to introduce
\begin{equation}
c_{\rm b}
=
\frac{i\gamma+2h_R}{2\sqrt{2}h},
\qquad
c_{\pm}
=
\frac{i\gamma+2h_R\pm2h_L}{2\sqrt{2}h}.
\label{domain_factors}
\end{equation}
Then $c_j=c_{\rm b}$ for a bulk domain, $c_j=1$ for a length-zero domain,
$c_j=c_+$ ($c_-$) for a left-boundary domain beginning with $\ket{1}$ ($\ket{0}$),
and $c_j=2c_{\rm b}$ for a right-boundary domain.
A length-$N$ domain carries twice the factor of the corresponding left-boundary domain.
Eqs.~\eqref{ness_psiT}, \eqref{MPA}, and \eqref{coeff} together provide an exact solution of the NESS. One can readily verify that this NESS is genuinely nonequilibrium, satisfying $[\rho_{\text{SS}},H] \neq 0$.

The exact NESS admits an MPO representation, which provides a convenient
framework for the numerical evaluation of physical observables.
As shown in Eq.~\eqref{coeff}, the NESS is fully characterized by
the bond structure and the domain-type factors.
Notice that the domain type factors only depend on the local environments
at the two domain boundaries, and therefore can be regarded as contributions
from additional boundary bonds. Fig.~\ref{fig2}(a) gives a schematic illustration of the above bonds.
In this way, the coefficient formula becomes fully local in bond representation, leading directly to the following MPS representation of $|\Psi_T\rangle$:
\begin{equation}
|\Psi_T\rangle
=
\langle \psi_L|
K_L L^{[2]}K L^{[3]}K\cdots
L^{[N-1]}K_R
|\psi_R\rangle .
\label{MPO_psiT}
\end{equation}
Here multiplication is performed in the auxiliary space. The
bulk tensors are
\begin{align}
L^{[j]}
&=
\begin{bmatrix}
\ket{1_j} & 0 & 0\\
0 & \ket{t_j} & 0\\
0 & 0 & \ket{0_j}
\end{bmatrix},
&
K
&=
\begin{bmatrix}
\lambda_{\rm I} & \sqrt{c_{\rm b}} & \lambda_{\rm II}\\
\sqrt{c_{\rm b}} & 1 & \sqrt{c_{\rm b}}\\
\lambda_{\rm II} & \sqrt{c_{\rm b}} & \lambda_{\rm I}
\end{bmatrix}.
\label{mpo_defs}
\end{align}
The boundary tensors are $\langle\psi_L|
=\bigl[\ket{1_1}\;\ket{t_1}\;\ket{0_1}\bigr]$, $|\psi_R\rangle=\bigl[\ket{1_N}\;\ket{t_N}\bigr]^T$,
\begin{align}
&K_L
=
\begin{bmatrix}
c_{+}\lambda_{\rm I}/\sqrt{c_{\rm b}} 
& c_{+}
& c_{+}\lambda_{\rm II}/\sqrt{c_{\rm b}}
\\
\sqrt{c_{\rm b}}
& 1
& \sqrt{c_{\rm b}}
\\
c_{-}\lambda_{\rm II}/\sqrt{c_{\rm b}}
& c_{-}
& c_{-}\lambda_{\rm I}/\sqrt{c_{\rm b}}
\end{bmatrix},\\
&
K_R
=
\begin{bmatrix}
2\lambda_{\rm I}\sqrt{c_{\rm b}} & \sqrt{c_{\rm b}}\\
2c_{\rm b} & 1\\
2\lambda_{\rm II}\sqrt{c_{\rm b}} & \sqrt{c_{\rm b}}
\end{bmatrix}.
\label{mpo_boundary}
\end{align}
For numerical implementation, it is useful to absorb the bond tensors $K$, $K_L$, and $K_R$ into the local tensors, yielding
\begin{equation}
    |\Psi_T\rangle=T^{[1]}T^{[2]}\cdots T^{[N]},
\label{MPO_psiT_Tmatrix}
\end{equation}
with $T^{[1]}=\langle\psi_L|$, $T^{[N]}=|\psi_R\rangle$,
$T^{[2]}=K_LL^{[2]}\sqrt K$, $T^{[N-1]}=\sqrt K L^{[N-1]}K_R$, and
$T^{[j]}=\sqrt K L^{[j]}\sqrt K$ for $3\le j\le N-2$.
This MPS is illustrated schematically in Fig.~\ref{fig2} (b).
By replacing the local states $|0\rangle$, $|1\rangle$, and $|t\rangle$
in $|\Psi_T\rangle$ by $\sigma_A^{-}$, $\sigma_A^{+}$, and
$\sigma_A^{0}/\sqrt{2}$, respectively, we can map the MPS representation of
$|\Psi_T\rangle$ onto an MPO representation of an operator
$\Omega\in\mathrm{End}(\mathcal H_A)$. The NESS is then given by
$\rho_{\mathrm{SS}}=\Omega\Omega^\dagger/\mathrm{Tr}(\Omega\Omega^\dagger)$.

\begin{figure}[t]
 \centering
        \includegraphics[width=0.48\textwidth]{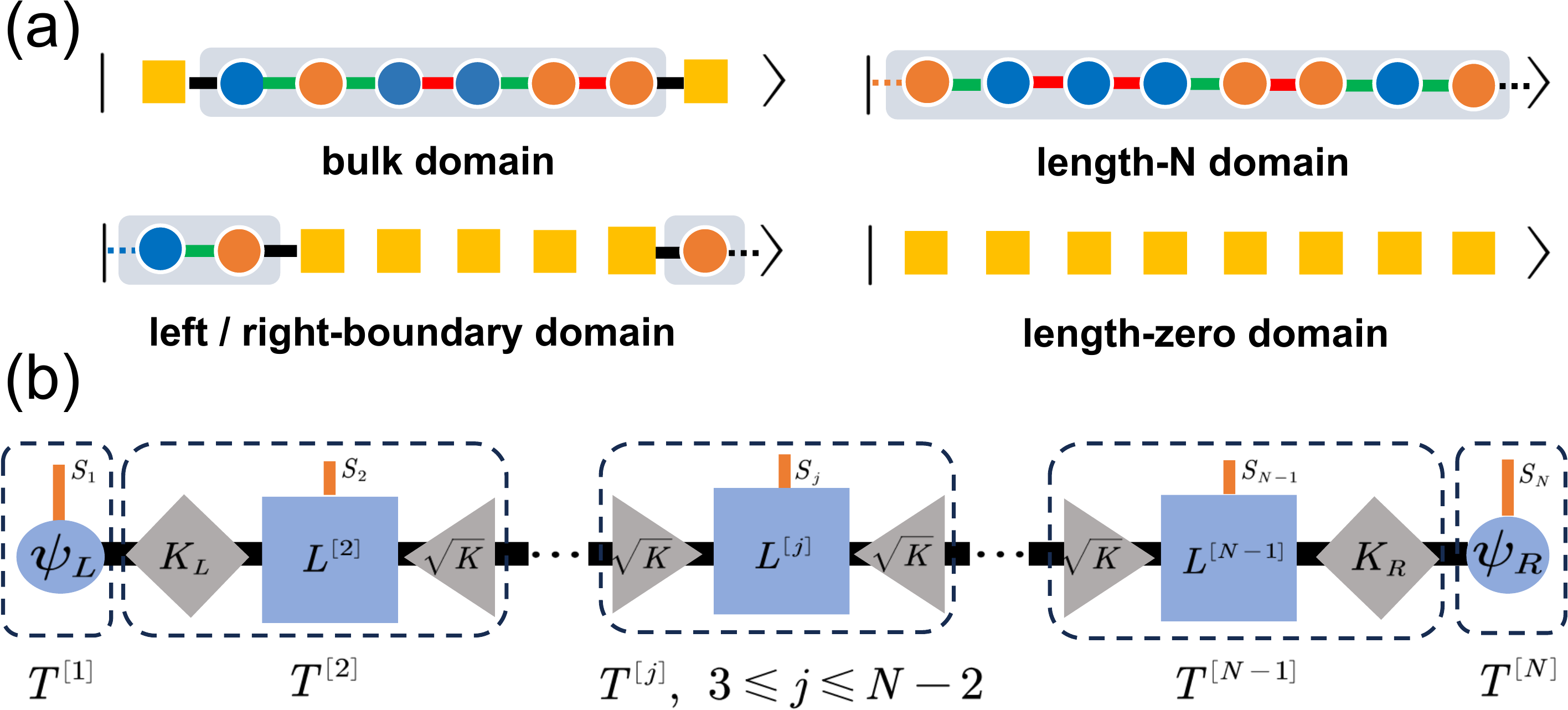}
 \caption{(a) Schematic illustration of the domain and bond structure for $N=8$.
The four panels show a bulk domain, a length-$N$ domain, a left/right-boundary domain, and a length-zero domain.
Gray boxes mark domains (except for the length-zero domain), yellow squares represent $\ket{t}$, and orange (blue) circles denote $\ket{1}$ ($\ket{0}$).
Red and green solid links indicate type-I and type-II bonds, respectively, while black solid links mark domain boundaries with the surrounding $\ket{t}$ states.
The dashed links indicate the remaining boundary contributions to the domain type factor $c_j$: the orange (blue) dashed link corresponds to a left-boundary domain starting with $\ket{1}$ ($\ket{0}$), whereas the black dashed link corresponds to a right-boundary domain.
(b) Matrix-product representation of $|\Psi_T\rangle$. 
Orange and black bonds denote physical and auxiliary indices, respectively. 
The tensors $K_L$, $K_R$, and $K$ can be absorbed into the local tensors $T^{[j]}$, as indicated by the dashed boxes.
}
\label{fig2}
\end{figure}

The exact NESS reveals an interesting connection to the corresponding closed system. 
For $h_R=h_L=0$, consider 
$Q
\equiv
\lim_{\gamma\rightarrow0}
i(\Omega-\Omega^{\dagger})/\gamma$,
one then finds
\begin{equation}
Q=\sum_{m=0}^{N-1}\frac{1}{h^{m+1}}
\sum_{j=1}^{N-m}
i\hat{\omega}_{2j-1}\hat{\omega}_{2(j+m)},
\label{resolvent symmetry}
\end{equation}
which satisfies $[H_{\mathrm{TFI}},Q]=0$. Here,
$\hat{\omega}_{2j-1}=\prod_{k<j}\sigma_k^{x}\sigma_j^{z}$
and
$\hat{\omega}_{2j}=\prod_{k<j}\sigma_k^{x}\sigma_j^{y}$.
For $h>1$, the coefficients in $Q$ decay exponentially with range,
whereas for $h<1$ they grow exponentially. The critical point $h=1$
is therefore directly reflected in the structure of the NESS.

It is worth noting that the exact construction presented here is not limited
to the specific Lindbladian in Eq.~\eqref{model}.
For certain modifications of the Lindbladian that preserve the HTRS, the same
method can still yield exact NESS solutions.
The extensions and the integrable structure of the NESS are discussed in Appendix~\ref{appB}.

\section{Steady-state Magnetization and Correlations}
\label{Sec4}

\begin{figure*}[t]
  \centering
  \includegraphics[width=\textwidth]{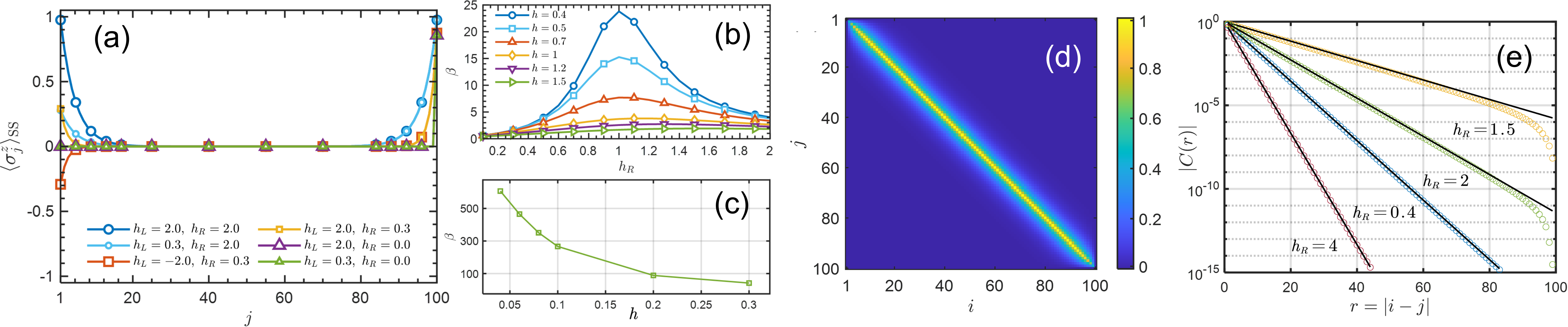}
  \caption{Steady-state magnetization and correlations for generic parameter settings.
(a) Magnetization profiles $\langle\sigma_j^z\rangle_{\rm SS}$ for different values of $h_L$ and $h_R$, indicated by different colors. For $h_R\neq0$, the magnetization is exponentially localized near the boundaries. At $h_R=0$, the magnetization is nonzero only at the rightmost site $N$.
(b) Magnetization localization length $\beta$ as a function of
$h_R$ for several values of the transverse field $h$.
(c) Localization length $\beta$ as a function of $h$ at $h_R=1$,
showing its rapid growth as $h$ decreases.
The common parameters in (b) and (c) are
$h_L=2$, $\gamma=0.1$, and $N=2000$.
(d) Connected correlation function
$C_{ij}=\langle\sigma_i^z\sigma_j^z\rangle_{\rm SS}
-\langle\sigma_i^z\rangle_{\rm SS}
\langle\sigma_j^z\rangle_{\rm SS}$
for $h_L=h_R=2$.
(e) Averaged connected correlation
$|C(r)|$, with
$C(r)=\sum_{|i-j|=r}C_{ij}/(N-r)$,
for $h_L=2$ and different values of $h_R$ indicated by different
colors. The black line shows the exponential decay predicted by the
ratio of the two leading transfer-matrix eigenvalues, $|\xi_2/\xi_1|$.
The common parameters in (a), (d), and (e) are
$N=100$, $\gamma=0.1$, and $h=0.5$.}
  \label{fig3}
\end{figure*}

Eq.~\eqref{MPO_psiT_Tmatrix} is the basis for computation of expectations of observables
\begin{equation}
     \text{Tr}(O\rho_{\rm SS})=\frac{\langle\Psi_T|O|\Psi_T\rangle}{\langle\Psi_T|\Psi_T\rangle} \;.
\label{observable}
\end{equation}
Noting that each local tensor can be decomposed as
$
T^{[j]}=\sum_{S_j} T^{[j]}_{S_j}\ket{S_j},
$
the trace of the unnormalized steady-state density matrix naturally defines a nonequilibrium partition function,
\begin{equation}
    \mathcal{Z}
    =
    \mathbb{T}^{[1]}\mathbb{T}^{[2]}\cdots \mathbb{T}^{[N]},
\label{partition_function}
\end{equation}
where
$
    \mathbb{T}^{[j]}
    =
    \sum_{S_j}
    T^{[j]}_{S_j}\otimes \left(T^{[j]}_{S_j}\right)^*
$
is the local transfer matrix. Any physical observable $O\in \mathrm{End}(\mathcal{H}_A)$ can be expanded in the Pauli-string basis as
$O=\sum_{\{\alpha\}} a_{\{\alpha\}} O_{\{\alpha\}}$ with
$
O_{\{\alpha\}}
=
\sigma^{\alpha_1}\otimes \sigma^{\alpha_2}\otimes\cdots\otimes \sigma^{\alpha_N},
$
where $a_{\{\alpha\}}\in\mathbb{C}$, $\{\alpha\}\equiv\alpha_1,\alpha_2,\ldots,\alpha_N$ and
$\alpha_j\in\{0,x,y,z\}$. 
The steady-state expectation value of $O$ is then given by
\begin{equation}
\langle O\rangle_{\rm SS}
=\sum_{\{\alpha\}}
\frac{
\mathbb{V}^{[1]}_{\alpha_1}
\mathbb{V}^{[2]}_{\alpha_2}
\cdots
\mathbb{V}^{[N]}_{\alpha_N}
}
{\mathcal{Z}},
\label{observables_computation}
\end{equation}
where the operator-inserted matrix is
$
\mathbb{V}^{[j]}_{\alpha_{j}}
=
\sum_{S_j,S'_j}
\langle{S'_j}|\sigma^{\alpha_{j}}\ket{S_j}\,
T^{[j]}_{S_j}\otimes \left(T^{[j]}_{S'_j}\right)^* .
$
% \label{mathbb_V}
% \end{equation}

We now investigate the steady-state $z$-magnetization $\langle\sigma_j^z\rangle_{\rm ss}$
and the connected correlation function
\begin{equation}
C_{ij}
=
\langle\sigma_i^z\sigma_j^z\rangle_{\rm SS}
-
\langle\sigma_i^z\rangle_{\rm SS}
\langle\sigma_j^z\rangle_{\rm SS}.
\label{C_r_def}
\end{equation}
For convenience, in the following we mainly focus on the regime
$h_R\geq 0$.

\subsection{Generic case}
\label{sec:general_case}
For generic parameter choices, the steady-state magnetization exhibits an
exponentially localized profile near the boundaries, as shown in
Fig.~\ref{fig3}(a). To characterize this behavior quantitatively, we
introduce the magnetization localization length $\beta$. 
For a site $j$
close to either site $1$ or site $N$, let $x$ denote its distance from the
nearest boundary. We can parameterize the decay as
\begin{equation}
    \langle \sigma_j^z\rangle_{\rm SS}
    \simeq
    \langle \sigma_{1,N}^z\rangle_{\rm SS}
    e^{-x/\beta},
\end{equation}
where $\beta$ defines the localization length.
With this definition, we will show below that, in the one-sided
dissipative chain considered here, the local field at the
dissipative boundary can exert a decisive influence on the spatial
structure of the entire system.

For fixed $\gamma$ and $h$, Fig.~\ref{fig3}(a) shows that both
$\langle\sigma_N^z\rangle_{\rm SS}$ and the localization length $\beta$
near the right boundary are determined by the strength of the
right-boundary longitudinal field. It is natural to expect local
properties near a boundary to be governed primarily by the terms acting
at that boundary. The behavior near the left boundary, however, is rather unusual. As illustrated by the dark-blue and yellow curves in
Fig.~\ref{fig3}(a), the same value of $h_L$ does not in general produce
the same magnetization profile near the left boundary. This indicates
that the local physics at the left end is strongly influenced by the
longitudinal field applied at the opposite end of the chain.
A direct explanation for this apparently nonlocal dependence can be obtained from the exact solution. 
From Eq.~\eqref{MPO_psiT}, one sees that $h_R$ enters every
bond tensor of the MPS representation of the purified NESS
$|\Psi_T\rangle$, whereas $h_L$ appears only in the left boundary tensor
$K_L$. Consequently, $h_L$ only affects $\langle\sigma_1^z\rangle_{\rm SS}$,
while the localization length $\beta$ at the left boundary is
governed by $h_R$. Furthermore, since the MPS in Eq.~\eqref{MPO_psiT} is
spatially uniform in the bulk, $\beta$ is the same at the two boundaries.

The exact solution can also provide a quantitative expression for $\beta$. Since $|\Psi_T\rangle$ possesses the local-bond structure encoded in Eq.~\eqref{MPO_psiT}, in the large-$N$ limit the local state at a given position can be approximately obtained by
propagating the state at the nearest boundary through successive local
bonds. We introduce the local-state weight vector for a spin pair in the
doubled system at distance $x$ from the nearest boundary,
\begin{equation}
    \bar{P}^{x}
    =
    \left[
    P^{x}_{1},
    P^{x}_{t},
    P^{x}_{0}
    \right]^T ,
\end{equation}
where $P^{x}_{\alpha}$ denotes the weight of the local pair state $|\alpha\rangle$ with $\alpha\in\{1,t,0\}$. According to
Eq.~\eqref{MPO_psiT}, these weights satisfy
\begin{equation}
\bar{P}^{x}
=
\left[
\begin{matrix}
|\lambda_{\rm I}|^2 &
|c_{\rm b}| &
|\lambda_{\rm II}|^2
\\
|c_{\rm b}| &
1 &
|c_{\rm b}|
\\
|\lambda_{\rm II}|^2 &
|c_{\rm b}| &
|\lambda_{\rm I}|^2
\end{matrix}
\right]^x
\bar{P}^{\rm nb},
\end{equation}
where $\bar{P}^{\rm nb}$ denotes the corresponding weight vector at the
nearest boundary. Diagonalizing the transfer matrix above and neglecting
terms that are exponentially suppressed with $N$, together with
$\langle\sigma_j^z\rangle_{\rm SS}\propto P_1^x-P_0^x$ up to an overall
normalization factor, gives
\begin{equation}
    \langle \sigma_j^z\rangle_{\rm SS}
    \propto e^{-x/\beta},
\label{skin_profile_estimate}
\end{equation}
with
\begin{equation}
\beta=
1/
\log\left[
\frac{\Lambda}
{2\bigl(|\lambda_{\rm I}|^2-|\lambda_{\rm II}|^2\bigr)}
\right]
,
\label{localization_length}
\end{equation}
where
\begin{equation}
\Lambda=
|\lambda_{\rm I}|^2+|\lambda_{\rm II}|^2+1
+\sqrt{
|\lambda_{\rm I}|^2+|\lambda_{\rm II}|^2-1
+8c_{\rm b}^2
}.
\end{equation}
Eq.~\eqref{localization_length} shows that the localization length
decreases with increasing $\gamma$ and $h$, whereas its dependence on
$h_R$ is nonmonotonic: $\beta$ first increases and then decreases as
$h_R$ is increased. This behavior is confirmed by the numerical results
in the weak-dissipation regime shown in Fig.~\ref{fig3}(b). In particular,
for $h<1$, the maximum localization length occurs near $h_R=1$.
Furthermore, at $h_R=1$, decreasing the transverse-field strength $h$
leads to a rapid increase of $\beta$, as shown in Fig.~\ref{fig3}(c).
The same behavior follows analytically from
Eq.~\eqref{localization_length}. In the limit $h\rightarrow0$, one finds
\begin{equation}
    \beta
    \simeq
    \left[
    \log\left(
    \frac{h_R^2+1+\gamma^2/4}
    {2|h_R|}
    \right)
    \right]^{-1}.
\label{localization_length_weak_h}
\end{equation}
Eq.~\eqref{localization_length_weak_h} therefore shows that, in the weak-dissipation and weak-transverse-field limit, the localization length diverges at $h_R=\pm1$. The magnetization profile then ceases to be boundary localized, signaling a qualitatively different steady-state structure that we discuss in the next subsection.

Another notable feature of Fig.~\ref{fig3}(a) occurs at $h_R=0$.
In this case, the steady-state magnetization vanishes at every site except the site $N$, irrespective of $\gamma$, $h$, and $h_L$. The origin of this behavior is
again evident from the exact NESS. At $h_R=0$,
\begin{equation}
    |\lambda_{\rm I}|=|\lambda_{\rm II}|,
    \qquad
    |c_+|=|c_-|.
\end{equation}
The first relation removes any preference between locally parallel and
antiparallel spin configurations along the $z$ direction, while the second
assigns equal weights to the two polarizations at the left boundary.
Strikingly, the left-boundary magnetization is therefore completely
insensitive to the left-boundary field. Consequently, all sites except the
directly dissipated one become unbiased between
$|\uparrow\rangle$ and $|\downarrow\rangle$, and the localization
length collapses to $\beta=0$.

The exact solution further enables a direct characterization
of the steady-state correlations. For generic parameters, the connected
two-point correlation function $C_{ij}$ decays exponentially with
the separation $r:=|i-j|$, as shown in Fig.~\ref{fig3}(d). This
behavior follows directly from the transfer-matrix structure of
$|\Psi_T\rangle$. In the bulk, $3\leq j\leq N-2$, the transfer
matrix admits the spectral decomposition
\begin{equation}
\mathbb{T}^{[j]}
=
\sum_n
\xi_n
|n_R\rangle\langle n_L|,
\end{equation}
where the eigenvalues are ordered according to
$|\xi_1|\geq|\xi_2|\geq\cdots$. The two-point function can then be
written schematically as
\begin{equation}
\langle \sigma_i^z\sigma_j^z\rangle_{\rm SS}
\simeq
\frac{
\langle 1_L|
\mathbb{V}^{[i]}_z
\mathbb{T}^{\,r-1}
\mathbb{V}^{[j]}_z
|1_R\rangle
}
{\xi_1^{\,r+1}}.
\label{two_point_transfer}
\end{equation}
At large separations, the leading contributions are
\begin{align}
\langle \sigma_i^z\sigma_j^z\rangle_{\rm SS}
\simeq\;&
\frac{
\langle 1_L|\mathbb{V}^{[i]}_z|1_R\rangle
\langle 1_L|\mathbb{V}^{[j]}_z|1_R\rangle
}
{\xi_1^2}
\nonumber\\
&+
\frac{
\langle 1_L|\mathbb{V}^{[i]}_z|2_R\rangle
\langle 2_L|\mathbb{V}^{[j]}_z|1_R\rangle
}
{\xi_1\xi_2}
\left(
\frac{\xi_2}{\xi_1}
\right)^r
+\cdots .
\label{two_point_asymptotic}
\end{align}
The first term is the factorized contribution
$\langle\sigma_i^z\rangle_{\rm SS}
\langle\sigma_j^z\rangle_{\rm SS}$,
whereas the leading connected contribution is governed by the
subleading eigenvalue $\xi_2$. Consequently, provided that
$|\xi_1|\neq|\xi_2|$, the connected correlation function decays
exponentially as
\begin{equation}
    C_{ij}
    \propto
    \left(
    \frac{\xi_2}{\xi_1}
    \right)^r ,
\end{equation}
up to a parameter-dependent prefactor. 
Fig.~\ref{fig3}(e) shows the averaged connected correlation
$C(r)=\frac{1}{N-r}\sum_{|i-j|=r} C_{ij}$
as a function of the separation $r$. Consistent with the analysis above, $C(r)$ decays exponentially with $r$, with the decay rate governed by $\xi_2/\xi_1$.

The above analysis also applies to $h_R<0$, although the
magnetization profile and correlation differ qualitatively from
those shown in Fig.~\ref{fig3}. For $h_R<0$, one has
$|\lambda_{\rm I}|<|\lambda_{\rm II}|$, so that neighboring spins
preferentially acquire opposite orientations. As a result, both the
magnetization profile and the connected correlations exhibit an
alternating odd--even sign structure. Further details for this regime
are presented in Appendix~\ref{appC}.

\begin{figure}[h]
\centering
\includegraphics[width=\linewidth]{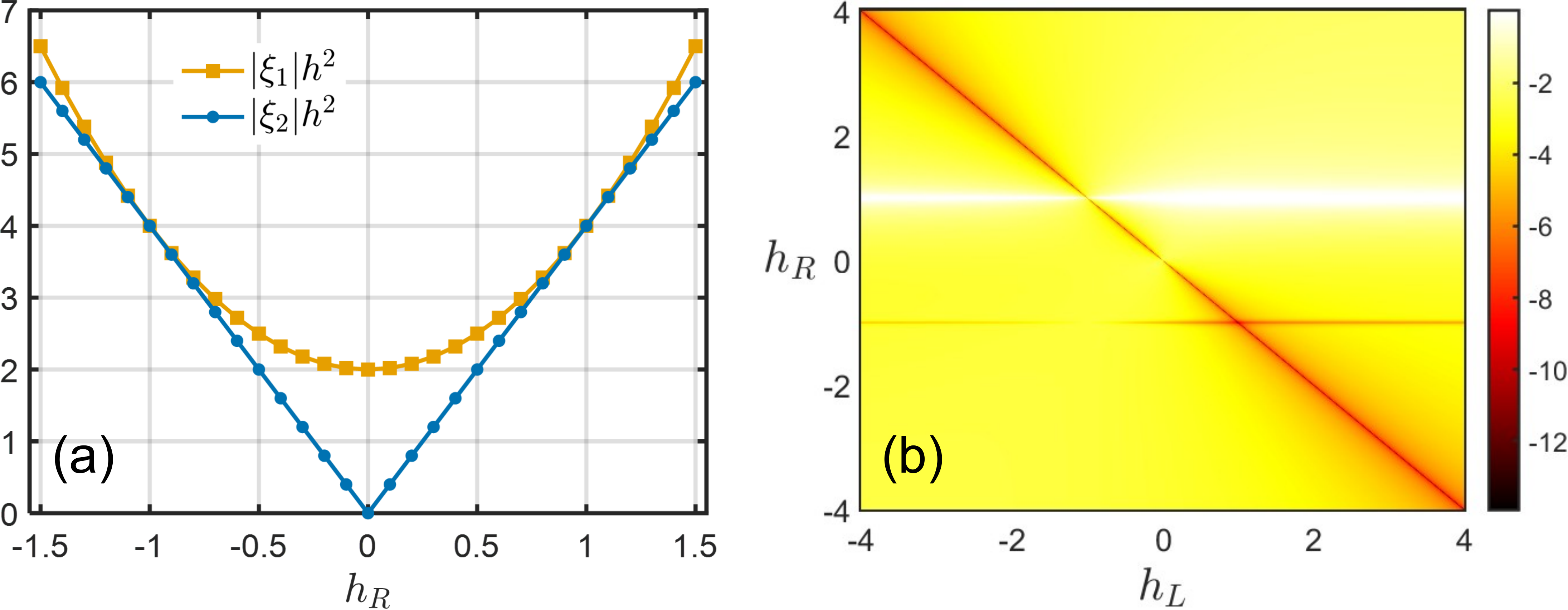}
\caption{Signatures of special steady-state structures in the weak-driving limit of Eq.~\eqref{weak_drive_limit}. Here we set $h=\gamma=10^{-3}$.
(a) Leading eigenvalue magnitudes $|\xi_1|$ and $|\xi_2|$ of $\mathbb{T}$ as functions of $h_R$ for $h_L=2$. The two magnitudes become equal at $h_R=\pm1$.
(b) $\log_{10}\langle m\rangle$ as a function of the boundary longitudinal fields $h_L$ and $h_R$ for $N=500$, where
$\langle m\rangle$ is the steady-state magnetization density.
The magnetization density is strongly enhanced near $h_R=1$ and suppressed near $h_R=-1$. In addition, $h_L=-1$ is distinct from other values of $h_L$ at $h_R=\pm1$.}
\label{fig4}
\end{figure}

\subsection{Long-range correlations in the weak-driving limit}
\label{sec:shock_case}

Eq.~\eqref{two_point_asymptotic} suggests that the above picture
breaks down when $|\xi_1|=|\xi_2|$. In Fig.~\ref{fig4}(a), we show
$|\xi_1|$ and $|\xi_2|$ as functions of $h_R$ in the weak-driving
limit,
\begin{equation}
\gamma\ll1,
\qquad
h\ll1.
\label{weak_drive_limit}
\end{equation}
The two eigenvalues become degenerate in magnitude precisely at
$h_R=\pm1$ in this limit. These are also the points at which the
magnetization localization length given by
Eq.~\eqref{localization_length_weak_h} diverges. Thus, both the
connected correlation and the magnetization profile indicate a
qualitative change in the steady-state structure at $h_R=\pm1$.

A more direct signature of this change is provided by the steady-state
magnetization density. Fig.~\ref{fig4}(b) shows
$
\langle m\rangle
=
\frac{1}{N}\sum_j
\langle\sigma_j^z\rangle_{\rm SS}
$
as a function of $h_L$ and $h_R$ in the weak-driving limit.
Along $h_R=-h_L$, $\langle m\rangle$ is strongly suppressed as a
consequence of the approximate combined reflection--spin-flip
symmetry. More importantly, two pronounced features emerge along
$h_R=\pm1$: the magnetization density is strongly enhanced near
$h_R=1$ and suppressed near $h_R=-1$, clearly distinguishing these
two lines from the generic parameter regime. In addition, along both
$h_R=\pm1$, the behavior at $h_L=-1$ is markedly different from that at other values of $h_L$. This indicates the emergence of a distinct steady-state structure at
$(h_R,h_L)=(\pm1,-1)$. In the following, we focus on the case $h_R=1$. The discussion for $h_R=-1$ is presented in Appendix~\ref{appC}.

We first consider the case $h_R=1$ and $h_L\neq -1$.
From Eq.~\eqref{coeff}, one finds that in the
weak-driving limit,
\[
|\lambda_{\rm I}|
\sim
\mathcal O(1/h),
\qquad
|\lambda_{\rm II}|
\sim
\mathcal O(\gamma/h).
\]
Since $\gamma\ll1$, type-I bonds are overwhelmingly favored in the
steady state. Starting from the dissipatively polarized right boundary,
this preference propagates throughout the chain, such that the dominant
configuration of $|\Psi_T\rangle$ is
$
|11\cdots 11\rangle .
$
Accordingly, all physical spins are polarized in the state
$\ket{\uparrow}$, corresponding to an ideal ferromagnetic NESS
with a spatially uniform magnetization and vanishing connected
correlations.
Fig.~\ref{fig5} shows the numerical results for
$h=\gamma=0.001$. Small deviations from the ideal
ferromagnetic picture are clearly visible. In Fig.~\ref{fig5}(a), the
magnetization profile exhibits an approximately linear spatial
variation rather than remaining perfectly flat, while
Fig.~\ref{fig5}(b) reveals very weak but finite long-range connected
correlations. These deviations originate from subleading configurations
that acquire nonzero weight in the NESS at finite transverse field and
dissipation strength.

\begin{figure}
\centering
\includegraphics[width=\linewidth]{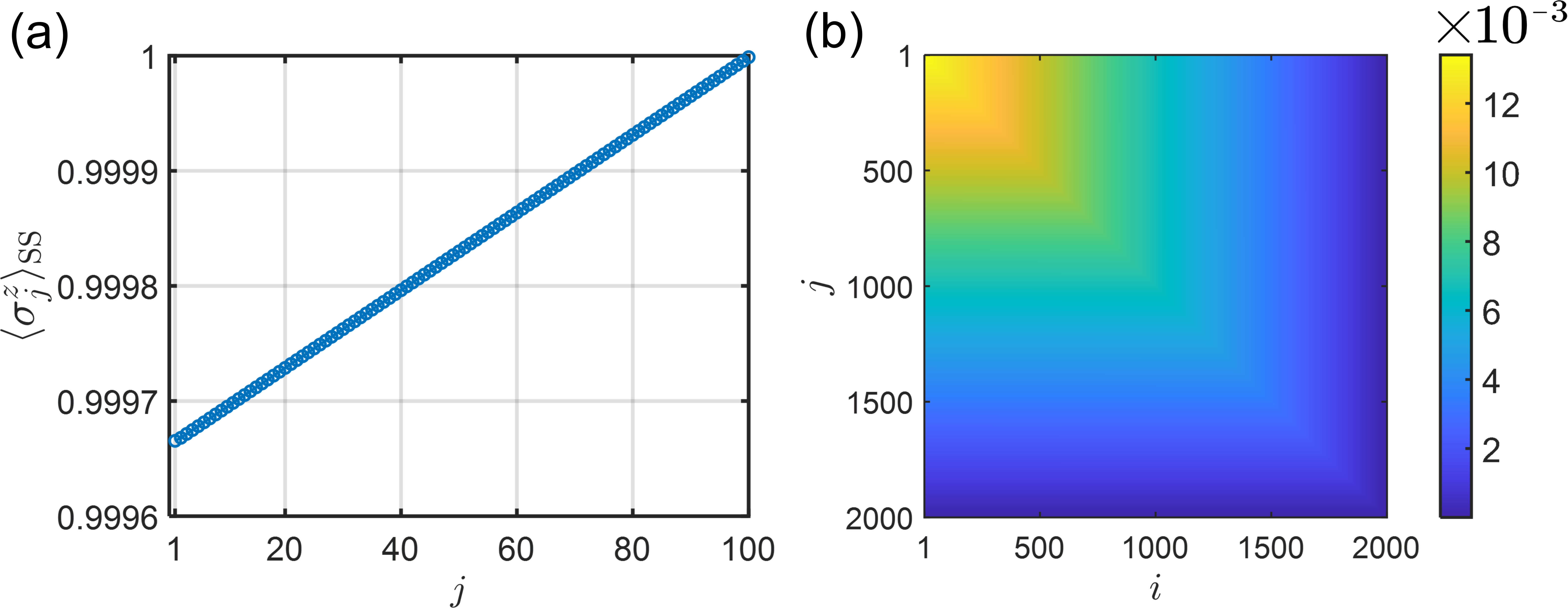}
\caption{Typical steady-state behavior for $h_R=1$ and $h_L\neq-1$ in the weak-driving limit. Here we take $h_L=-2$ and set $h=\gamma=10^{-3}$.
(a) Magnetization profile $\langle\sigma_j^z\rangle_{\rm SS}$ for $N=100$.
(b) Connected correlation function $C_{ij}$ for $N=2000$.
The nearly polarized profile and the nearly vanished connected correlations indicate that the NESS approaches an ideal ferromagnetic state, while the small deviations arise from subleading configurations induced by finite $h$ and $\gamma$.}
\label{fig5}
\end{figure}

\begin{figure*}[t]
\centering
\includegraphics[width=\textwidth]{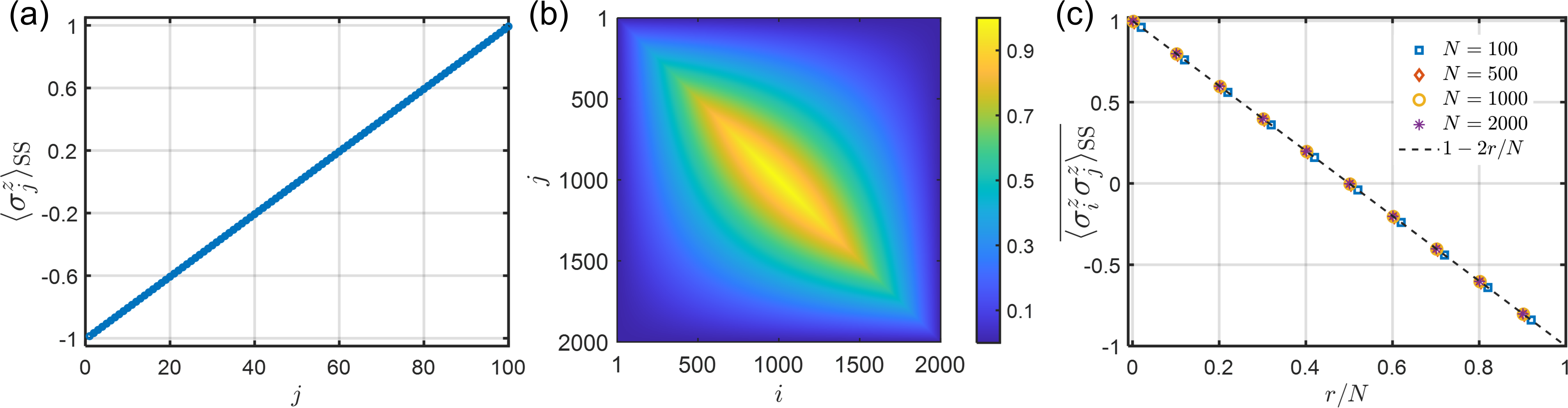}
\caption{Steady-state properties for $h_R=1$ and $h_L=-1$ in the weak-driving limit, where the NESS is dominated by single-interface configurations. Here we set $h=\gamma=10^{-3}$.
(a) Magnetization profile $\langle\sigma_j^z\rangle_{\rm SS}$ as a function of site $j$ for $N=100$, showing a linear spatial dependence.
(b) Connected correlation function
$C_{ij}=\langle\sigma_i^z\sigma_j^z\rangle_{\rm SS}
-\langle\sigma_i^z\rangle_{\rm SS}\langle\sigma_j^z\rangle_{\rm SS}$
as a function of sites $i$ and $j$ for $N=2000$, exhibiting long-range correlations.
(c) Distance-averaged two-point correlation 
$\overline{\langle\sigma_i^z\sigma_j^z\rangle}_{\rm SS}$ as a function of $r/N$
for different system sizes $N$, where the overline denotes the average over all pairs $(i,j)$ with fixed $|i-j|=r$.
The dashed lines denote the theoretical prediction from the exact solution.}
\label{fig6}
\end{figure*}

The subleading configurations discussed above give only small corrections to the dominant
steady-state structure for $h_L\ne-1$.
At $h_L=-1$, however, they become dominant and qualitatively alter the
steady-state properties. This behavior is illustrated in Fig.~\ref{fig6}
for $(h_R,h_L)=(1,-1)$. In contrast to the nearly ferromagnetic state in
Fig.~\ref{fig5}, the magnetization in Fig.~\ref{fig6}(a) varies
approximately linearly from
$\langle\sigma_1^z\rangle_{\rm SS}\simeq-1$ to
$\langle\sigma_N^z\rangle_{\rm SS}\simeq1$, accompanied by pronounced
long-range correlations [Fig.~\ref{fig6}(b)].

The microscopic origin of this behavior follows directly from
Eq.~\eqref{coeff}. In the weak-driving limit, $h_L=-1$ gives
\[
|c_+|
\sim
\mathcal O(\gamma/h),
\qquad
|c_-|
\sim
\mathcal O(1/h),
\]
so that the left boundary strongly favors the local state $\ket{0}$.
By contrast, the dissipatively polarized right boundary, together with
the strong preference for type-I bonds, favors uniform polarization in
$\ket{1}$. This competition can be accommodated at minimal cost by
introducing either a single type-II bond connecting $\ket{0}$ and
$\ket{1}$ or a single $\ket{t}$ state. We refer to this minimal structure
separating the uniformly $\ket{0}$ and $\ket{1}$ regions as an interface.
Since each additional interface strongly suppresses the weight of the
corresponding configuration in $|\Psi_T\rangle$, single-interface
configurations dominate in the weak-driving limit. Moreover,
Eq.~\eqref{MPO_psiT} shows that $|\Psi_T\rangle$ is spatially uniform in
the bulk, leaving no preferred interface position. The interface is
therefore uniformly distributed along the chain, and
\begin{equation}
\begin{aligned}
|\Psi_T\rangle
\simeq
\frac{1}{\sqrt{2N}}
&\sum_{j=1}^{N}
\Big(
|0_10_2\cdots0_{j-1}1_j1_{j+1}\cdots1_N\rangle\\
&\qquad+
|0_10_2\cdots0_{j-1}t_j1_{j+1}\cdots1_N\rangle
\Big).
\end{aligned}
\label{single-kink-state}
\end{equation}
The single-interface mechanism found here has close analogues in other
one-dimensional statistical systems. In a classical Ising chain with
competing boundary fields, the zero-temperature transition is
characterized by single-kink configurations~\cite{Chancellor2022Ising}, while on
the coexistence line of the open ASEP the steady state contains a
delocalized shock separating two regions with different densities
~\cite{Derridat1993,Schutz1998,Schutz2004}.

Eq.~\eqref{single-kink-state} immediately yields
\begin{equation}
\langle\sigma_j^z\rangle_{\rm SS}
\simeq
\frac{2j}{N}-1,
\label{linear_profile}
\end{equation}
reproducing the linear profile in Fig.~\ref{fig6}(a).
Eq.~\eqref{single-kink-state} also determines the two-point correlations. For a given
single-interface configuration, $\sigma_i^z\sigma_j^z=-1$ if the
interface lies between sites $i$ and $j$, and
$\sigma_i^z\sigma_j^z=+1$ otherwise. Averaging uniformly over the
interface position gives
\begin{equation}
\langle\sigma_i^z\sigma_j^z\rangle_{\rm SS}
\simeq
1-2\frac{r}{N},
\label{2-point-correlator}
\end{equation}
where $r=|i-j|$. Fig.~\ref{fig6}(c) shows the distance-averaged
correlation
$
\overline{\langle\sigma_i^z\sigma_j^z\rangle}_{\rm SS}
:=
\sum_{|i-j|=r}\langle\sigma_i^z\sigma_j^z\rangle_{\rm SS}
/
\sum_{|i-j|=r}1
$
as a function of $r/N$ for several $N$. The data collapse onto
Eq.~\eqref{2-point-correlator}, in excellent agreement with the
single-interface picture.

The single-interface description becomes asymptotically exact in the
weak-driving limit and remains accurate for sufficiently small systems
at finite $h$ and $\gamma$. For any finite $h$ and $\gamma$, subleading configurations acquire nonzero weights and become increasingly important as $N$ increases. Consequently, the simple dominant-configuration picture eventually receives appreciable corrections for sufficiently large systems. A more detailed analysis of these corrections and the system-size regime over which the single-interface picture remains valid at finite $\gamma$ and $h$ is presented in Appendix~\ref{appD}.

\section{Summary}\label{Sec5}
In this work, we obtain an exact NESS of a boundary-driven TFI chain with
dissipation acting at a single boundary. Exploiting HTRS, we map the
steady-state problem onto a pure NESS of an associated doubled system and
show that this state admits a simple local-bond structure. This structure
naturally leads to a nonequilibrium partition function and reveals how a
local field at the dissipative boundary can control the spatial organization
of the NESS throughout the chain.
For generic parameters, the NESS exhibits boundary-localized
magnetization and short-ranged correlations, with the associated length
scales governed by the field at the dissipative boundary. In the
weak-driving limit, suitably tuned boundary fields reorganize the NESS into a delocalized
single-interface structure, producing a linear magnetization
profile and long-range correlations. 
Our results provide an exact example of obtaining the NESS via HTRS (equivalently, KMS detailed balance), and demonstrate how boundary control can govern global properties in an open quantum system.

\begin{acknowledgments}
We thank Xueliang Wang, Yu-Guo Liu and Mingchen Zheng for helpful discussions.
This work is supported by National Key Research and Development Program of China
(Grant No. 2023YFA1406704) and the NSFC under Grants No. 12474287, No. 12547107, and No. T2121001.
\end{acknowledgments}

\appendix

\section{Proof of steady-state uniqueness}
\label{appA}
In this Appendix, we prove that the Lindblad dynamics considered in the
main text possesses a unique steady state for $N\ge 2$, $h\neq0$ and $\gamma>0$.

According to the criterion of Yoshida~\cite{Yoshida2024}, it is sufficient
to show that
\begin{equation}
    G_N=H_N-iL^\dagger L,
    \qquad L=\sqrt{\gamma}\sigma_N^+,
\end{equation}
generate the full operator algebra $\mathcal{B}(\mathcal H_N)$. By Burnside's
theorem~\cite{Burnside,Burnside2}, this is equivalent to showing that
$G_N$ and $L$ have no common nontrivial invariant subspace. Since
$L^\dagger L=\gamma(1-\sigma_N^z)/2$, a scalar shift allows us to replace
$G_N$ by
\begin{equation}
    \widetilde G_N=H_N+\frac{i\gamma}{2}\sigma_N^z .
\end{equation}
Writing $\mathcal H_N=\mathcal H_{N-1}\otimes\mathbb C^2$ in the
$\sigma_N^z$ basis gives
\begin{equation}
    \widetilde G_N=
    \begin{bmatrix}
        K_{N-1}+q & hI\\
        hI & K_{N-1}-q
    \end{bmatrix},
    \qquad
    L=
    \begin{bmatrix}
        0&I\\
        0&0
    \end{bmatrix},
\label{eq:uniqueness_block}
\end{equation}
where
\begin{align}
    K_{N-1}
    &=
    \sum_{j=1}^{N-2}\sigma_j^z\sigma_{j+1}^z
    +h\sum_{j=1}^{N-1}\sigma_j^x
    +h_L\sigma_1^z,\\
    q&=\sigma_{N-1}^z+\beta I,
    \qquad
    \beta=h_R+\frac{i\gamma}{2}.
\end{align}
Since multiplication by a nonzero scalar does not affect invariant
subspaces, we suppress the overall factor $\sqrt{\gamma}$ in $L$ below.

Suppose that $S\subseteq\mathcal H_N$ is invariant under both
$\widetilde G_N$ and $L$. Define
\begin{align}
    U_0&=\{u:(u,0)^T\in S\},\\
    U_1&=\{d:\exists\,u\ {\rm such\ that}\ (u,d)^T\in S\}.
\end{align}
The invariance under $L$ gives $U_1\subseteq U_0$, while
$\widetilde{G}_N$ applied to $(u,0)^T$, with $u\in U_0$,
gives $U_0\subseteq U_1$ for $h\neq0$. Hence
$U_0=U_1\equiv U$.
For $u\in U$, the lower component of
$[\widetilde G_N]^{2}(u,0)^T$ is $2hK_{N-1}u$. Therefore,
\begin{equation}
    K_{N-1}U\subseteq U.
\label{eq:K_U}
\end{equation}

Moreover,
\begin{equation}
    \begin{bmatrix}qu\\hu\end{bmatrix}
    =
    \widetilde G_N
    \begin{bmatrix}u\\0\end{bmatrix}
    -
    \begin{bmatrix}K_{N-1}u\\0\end{bmatrix}
    \in S.
\label{eq:qhu}
\end{equation}
Applying $\widetilde G_N$ to Eq.~\eqref{eq:qhu} and subtracting
Eq.~\eqref{eq:qhu} with $u$ replaced by $K_{N-1}u$, we obtain
\begin{equation}
    \bigl([K_{N-1},q]+2\beta q\bigr)U\subseteq U,
\label{eq:q_condition}
\end{equation}
where $q^2=2\beta q+(1-\beta^2)I$ has been used.
Let $P$ denote the orthogonal projector onto $U$ and $Q=I-P$.
Since $K_{N-1}$ is Hermitian, Eq.~\eqref{eq:K_U} implies
$[P,K_{N-1}]=0$. Defining $X=QqP$, Eq.~\eqref{eq:q_condition}
then yields
\begin{equation}
    K_QX-XK_U+2\beta X=0,
\label{eq:sylvester}
\end{equation}
where $K_U$ and $K_Q$ are the restrictions of $K_{N-1}$ to
$U$ and $U^\perp$, respectively. Multiplying
Eq.~\eqref{eq:sylvester} by $X^\dagger$ and taking the imaginary
part of the trace gives
\begin{equation}
    \gamma\;\text{Tr}(X^\dagger X)=0,
\end{equation}
since $K_U$ and $K_Q$ are Hermitian and
$\operatorname{Im}\beta=\gamma/2$. Thus $X=0$, so that
$qU\subseteq U$, and consequently
\begin{equation}
    \sigma_{N-1}^zU\subseteq U.
\label{eq:z_U}
\end{equation}

It remains to use the following elementary result.

\begin{lemma}
For $h\neq0$, the operators
\begin{equation}
    K_n=
    \sum_{j=1}^{n-1}\sigma_j^z\sigma_{j+1}^z
    +h\sum_{j=1}^{n}\sigma_j^x
    +h_L\sigma_1^z
\end{equation}
and $\sigma_n^z$ have no common nontrivial invariant subspace.
\label{lemma:uniqueness}
\end{lemma}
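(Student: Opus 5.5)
Since $K_n$ is Hermitian and $\sigma_n^z$ is Hermitian, a common invariant subspace $U$ has an orthogonal projector $P$ commuting with both. It therefore suffices to show that the only operators commuting with $K_n$ and $\sigma_n^z$ are scalars, i.e.\ that the algebra they generate is all of $\mathcal B(\mathcal H_n)$. We argue by induction on $n$, peeling off site $n$ exactly as in the block decomposition~\eqref{eq:uniqueness_block}.

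\emph{Base case $n=1$.} Here $K_1=h\sigma^x+h_L\sigma^z$. Since $[P,\sigma^z]=0$, the projector $P$ is diagonal in the $\sigma^z$ basis. Since also $[P,K_1]=0$, we get $[P,h\sigma^x]=0$, and with $h\neq0$ this forces $P\in\{0,I\}$.

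\emph{Inductive step.} Suppose $P$ commutes with $K_n$ and $\sigma_n^z$. Because $[P,\sigma_n^z]=0$, in the $\sigma_n^z$ basis we have $P=\mathrm{diag}(P_\uparrow,P_\downarrow)$, where $P_\uparrow,P_\downarrow$ are projectors on $\mathcal H_{n-1}$. Write
\[
K_n=\begin{bmatrix}K_{n-1}+\sigma_{n-1}^z & hI\\ hI & K_{n-1}-\sigma_{n-1}^z\end{bmatrix}.
\]
The off-diagonal blocks of $[P,K_n]=0$ read $h(P_\uparrow-P_\downarrow)=0$, so $P_\uparrow=P_\downarrow\equiv P'$. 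The diagonal blocks then give $[P',K_{n-1}\pm\sigma_{n-1}^z]=0$. Adding and subtracting these two relations yields $[P',K_{n-1}]=0$ and $[P',\sigma_{n-1}^z]=0$. By the induction hypothesis $P'\in\{0,I\}$, hence $P\in\{0,I\}$, and $U$ is trivial.

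The only subtle point is the reduction from ``no common invariant subspace'' to ``commuting projector''. This requires the invariant subspace to be invariant under adjoints as well. That holds here because both operators are Hermitian (including the $h_L$ term, with $h_L$ real), so the invariance of $U$ under $K_n$ and $\sigma_n^z$ implies invariance of $U^\perp$ too, making $P$ commute with both. Everything else is the elementary block computation above.
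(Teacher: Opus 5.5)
Your argument is correct and follows the paper's proof: the same induction on $n$ with the same block decomposition of $K_n$, in which the off-diagonal blocks identify the two $\sigma_n^z$ sectors and the diagonal blocks, added and subtracted, give invariance under $K_{n-1}$ and $\sigma_{n-1}^z$. The only difference is that you work with a commuting orthogonal projector $P=\mathrm{diag}(P',P')$, whereas the paper works directly with the subspace decomposition $W=(W_0\otimes|\uparrow\rangle)\oplus(W_0\otimes|\downarrow\rangle)$; the subspace version does not need Hermiticity, but $K_n$ and $\sigma_n^z$ are Hermitian here, so your reduction is valid.
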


\begin{proof}
The statement follows by induction. For $n=1$,
$K_1=h\sigma_1^x+h_L\sigma_1^z$, which shares no nontrivial
invariant subspace with $\sigma_1^z$ when $h\neq0$.

Assume the statement holds for $n-1$. If
$W\subseteq\mathcal H_n$ is invariant under both $K_n$ and
$\sigma_n^z$, then
\begin{equation}
    W=(W_+\otimes|\uparrow\rangle)
    \oplus(W_-\otimes|\downarrow\rangle).
\end{equation}
Using
\begin{equation}
    K_n=
    \begin{bmatrix}
        K_{n-1}+\sigma_{n-1}^z & hI\\
        hI & K_{n-1}-\sigma_{n-1}^z
    \end{bmatrix},
\end{equation}
the off-diagonal blocks imply $W_+=W_-\equiv W_0$ for $h\neq0$,
while the diagonal blocks imply
\begin{equation}
    K_{n-1}W_0\subseteq W_0,
    \qquad
    \sigma_{n-1}^zW_0\subseteq W_0.
\end{equation}
The induction hypothesis therefore gives
$W_0=\{0\}$ or $W_0=\mathcal H_{n-1}$, completing the proof.
\end{proof}

Eqs.~\eqref{eq:K_U} and \eqref{eq:z_U}, together with
Lemma~\ref{lemma:uniqueness}, imply that
$U=\{0\}$ or $U=\mathcal H_{N-1}$.
In the former case $S=\{0\}$. In the latter,
$(u,0)^T\in S$ for every $u\in\mathcal H_{N-1}$, and
Eq.~\eqref{eq:uniqueness_block} then implies
$(0,u)^T\in S$ for every $u$, so that $S=\mathcal H_N$.
Thus $\widetilde G_N$ and $L$ have no common nontrivial invariant
subspace. Burnside's theorem therefore gives the full operator algebra,
and the criterion of Ref.~\cite{Yoshida2024} guarantees a unique and full-rank NESS.

\section{Derivation and Generalization of the Exact Solution}
\label{appB}
\subsection{Proof of Eq.~\eqref{coeff}}\label{appB1}
In this section, we derive the central result of the main text,
Eq.~\eqref{coeff}, from Eqs.~\eqref{specific_constraints_0}--\eqref{parameter_boundary}.

Eq.~\eqref{specific_constraints_0} immediately yields the following
three relations:
\begin{align}
    &\operatorname{Coeff}(\dots t_{j-1}0_jt_{j+1}\dots)
    =
    \operatorname{Coeff}(\dots t_{j-1}1_jt_{j+1}\dots),
    \label{lemma0_1}
    \\
    &\operatorname{Coeff}(\dots 0_{j-1}0_j1_{j+1}\dots)
    =
    \operatorname{Coeff}(\dots 0_{j-1}1_j1_{j+1}\dots),
    \label{lemma0_2}
    \\
    &\operatorname{Coeff}(\dots 1_{j-1}0_j0_{j+1}\dots)
    =
    \operatorname{Coeff}(\dots 1_{j-1}1_j0_{j+1}\dots).
    \label{lemma0_3}
\end{align}
Here, ``$\dots$'' indicates that all sites other than $j-1$, $j$, and
$j+1$ are identical on the two sides of each equality, whereas ``$\cdots$''
is used as the usual ellipsis. Combining Eqs.~\eqref{lemma0_2} and
\eqref{lemma0_3}, we further obtain
\begin{equation}
    \operatorname{Coeff}(\dots \underset{j}{0}11\cdots10\dots)
    =
    \operatorname{Coeff}(\dots \underset{j}{0}100\cdots0\dots).
    \label{lemma0_4}
\end{equation}
Site labels are omitted below whenever no ambiguity arises.
For convenience, we can introduce the following definition.

\textit{Definition} 3 (cluster).
For a basis state $\ket{S_1S_2\cdots S_N}$, a consecutive sequence of
$l$ sites is called a cluster if
\begin{equation}
    S_j=1,
    \qquad
    j=k,k+1,\ldots,k+l-1,
\end{equation}
and it is bounded by $S_{k-1}=S_{k+l}=0$.

We first consider the coefficient of a basis state
$\ket{S_1S_2\cdots S_N}$ containing a single domain in the expansion of
$\ket{\Psi_T}$. For convenience, we introduce the notation
\begin{equation}
    \left(j\mid S_jS_{j+1}\cdots S_{j+k}\right)
    :=
    \operatorname{Coeff}
    \left(
    t\cdots t
    S_jS_{j+1}\cdots S_{j+k}
    t\cdots t
    \right),
\end{equation}
where $S_j,S_{j+1},\ldots,S_{j+k}\neq t$.

Consider an arbitrary single-domain basis state that is neither uniformly
$\ket{0}$ nor uniformly $\ket{1}$ and contains $n\geq 0$ clusters of
arbitrary lengths. Its coefficient in $\ket{\Psi_T}$ is equal
to
\begin{equation}
    (
    j\mid
    \alpha
    \underbrace{0101\cdots01}_{\substack{n\ \text{copies}\\
    \text{of ``01''}}}
    00\cdots0\beta
       ),
    \label{lemma_domain}
\end{equation}
where $\alpha,\beta\in\{0,1\}$ specify the states at the two ends of the
domain, and $j$ denotes its starting site. Eq.~\eqref{lemma_domain}
follows directly from repeated application of
Eqs.~\eqref{lemma0_2}--\eqref{lemma0_4}. 

We next establish the following lemma.
\begin{lemma}
Consider two basis states appearing in the expansion of $\ket{\Psi_T}$ that
are identical except for their rightmost domains, both of length $n$. If the
two domains are bulk domains and contain the same numbers of type-I and type-II bonds, then the two basis states have equal coefficients in the expansion of $\ket{\Psi_T}$.
\label{lemma2}
\end{lemma}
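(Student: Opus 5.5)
Within a single domain, Eqs.~\eqref{lemma0_2}--\eqref{lemma0_4} let us rearrange the interior spins of a domain without changing the coefficient, as long as the two end states of the domain and the surrounding configuration stay fixed. The plan is to show that every such rearrangement preserves the numbers of type-I and type-II bonds. We then bring each domain to a canonical representative, fixed only by its length, its end states and its bond counts. The first ingredient is Eq.~\eqref{lemma_domain}, which applies to the rightmost domain. It is a bulk domain, so it is bounded by $\ket{t}$ on both sides, and the local relations \eqref{lemma0_2}--\eqref{lemma0_4} act only on sites $j-1,j,j+1$ inside it. Their validity ranges ($2\le j\le N-2$ for $\beta=0$, $2\le j\le N-1$ for $\beta=1$) are respected because the domain does not touch site $N$. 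Hence each of the two coefficients reduces to $(\cdots\mid\alpha\,(01)^{n_c}\,0\cdots0\,\beta)$, with the rest of the chain untouched.

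Step one is to check bond-count invariance of the moves. Relation \eqref{lemma0_2} replaces $001$ by $011$, i.e.\ the bond pair (I, II) by (II, I). Relation \eqref{lemma0_3} replaces $100$ by $110$, i.e.\ (II, I) by (I, II). So $N_{\rm I}$ and $N_{\rm II}$ inside the domain are conserved. The end states $\alpha,\beta$ are also conserved, since the moves only change the middle site of a triple lying within the domain. Step two is a counting argument: given the length $n$ and the end states $\alpha,\beta$, the canonical word $\alpha(01)^{n_c}0\cdots0\beta$ is fixed entirely by its number $n_c$ of clusters. In this word, $N_{\rm II}$ is determined by $n_c$ together with $\alpha,\beta$ (a string of $n$ spins has $n-1$ bonds, and $N_{\rm II}$ counts sign changes). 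Conversely, $N_{\rm II}$ fixes $n_c$ once $\alpha$ and $\beta$ are known.

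The main obstacle is that the lemma does not assume the two domains have the same end states $\alpha,\beta$. With equal $N_{\rm II}$, only the parity condition $\alpha=\beta \Leftrightarrow N_{\rm II}$ even is forced. So we still need to relate, for example, a domain $0\cdots$ with $\alpha=0$ to one with $\alpha=1$ having the same bond counts. For this I would use Eq.~\eqref{lemma0_1} and the general relation \eqref{specific_constraints_0} at the domain endpoints, where $\zeta^{t}_{\beta}$ enters. Applying \eqref{specific_constraints_0} at the first site of the domain (neighbor $t$ on the left), and similarly at the last site, gives differences of the type $\mathrm{Coeff}(t0x\cdots)-\mathrm{Coeff}(t1x\cdots)=\pm\frac{2}{\sqrt2 h}\mathrm{Coeff}(ttx\cdots)$, whose right-hand side has a shorter domain. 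I would then argue by induction on $n$. The induction hypothesis says the shorter-domain coefficients on the right-hand side depend only on bond counts. A global spin flip $0\leftrightarrow1$ of the domain preserves bond types and flips the signs of the relevant $\zeta$'s. Comparing the two flipped versions of these relations should then show that flipping an entire bulk domain leaves its coefficient unchanged. With flips allowed, the end states can be normalized, say $\alpha=0$, and $\beta$ is then fixed by the parity of $N_{\rm II}$. The canonical forms of the two domains coincide, and so do the coefficients.

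The base case $n=1$ is exactly Eq.~\eqref{lemma0_1}. The delicate step is setting up the flip-invariance induction cleanly. In particular, we must make sure the shorter configurations produced on the right-hand side are still rightmost bulk domains, or else reduce to $\ket{t\cdots t}$-padded ones covered by the hypothesis.
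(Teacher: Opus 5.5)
\textbf{Missing translation invariance.} The lemma, as the paper proves it and later uses it (Lemma~3 relies on it for ``translating a bulk domain does not change the coefficient''), also covers two rightmost domains of length $n$ that start at \emph{different} sites. Your proposal never relates coefficients at different positions. The moves \eqref{lemma0_2}--\eqref{lemma0_4} and the canonical form \eqref{lemma_domain} keep the domain where it is. In your flip comparison, the shorter domain on the right-hand side is compared only with its own complement at the same sites. So your final step, ``the canonical forms coincide, and so do the coefficients,'' is justified only when both domains start at the same site.

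This is not a formality. Already the equality $(j\mid 0)=(j+1\mid 1)$ has to be extracted from the four length-2 relations, not from anything at fixed position. Your framework can supply translation once same-position equivalence is known at length $n+1$. Take a length-$n$ word $w$ with reversal $w^{\mathrm{R}}$ and $x\in\{0,1\}$. Apply \eqref{specific_constraints_0} at the left end of $x\,w$ and at the right end of $w^{\mathrm{R}}x$, both domains starting at site $k$. The structure constants agree, $\zeta^{t}_{w_1}=\zeta^{w_1}_{t}=\pm2$, and $x\,w$ and $w^{\mathrm{R}}x$ have equal bond counts. Subtracting the two relations gives $(k+1\mid w)=(k\mid w^{\mathrm{R}})=(k\mid w)$. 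This holds whenever $k$ and $k+1$ are both admissible starts for a length-$n$ bulk domain, because then a length-$(n+1)$ bulk domain fits at $k$.

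\textbf{The flip step does not close as described.} Set $\delta(w)=(k\mid w)-(k\mid\bar w)$. Comparing the left-end relation for $x\,u$ with the one for its complement uses $(k+1\mid\bar u)=(k+1\mid u)$ and $\zeta^{t}_{\bar u_1}=-\zeta^{t}_{u_1}$, and yields only $\delta(0u)=\delta(1u)$. The right end likewise gives $\delta(v0)=\delta(v1)$. Neither says $\delta=0$. You need three further facts, after which $\delta\equiv0$ follows:
\begin{itemize}
\item $\delta$ is invariant under \eqref{lemma0_2}--\eqref{lemma0_3}, since the complements of these moves are again such moves;
\item endpoint flips together with these moves connect $w$ to $\bar w$;
\item $\delta(\bar w)=-\delta(w)$.
\end{itemize}

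\textbf{Comparison with the paper.} Once completed this way, your same-position argument is self-contained. The paper's fixed-position step is not: it uses the translation part of the induction hypothesis at length $m-1$, since the third terms in \eqref{lemma2_s1}--\eqref{lemma2_s2} sit at sites $j+1$ and $j$. The two routes therefore run in opposite orders. The paper obtains end-state equivalence at length $m$ from translation at $m-1$. Yours would obtain end-state equivalence first and translation afterwards. As written, though, the translation half is absent.
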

\begin{proof}
We first prove the lemma for single-domain states. The case $n=1$
follows immediately from Eq.~\eqref{lemma0_1}. We next consider $n=2$.
Eq.~\eqref{specific_constraints_0} gives
\begin{equation}
\begin{aligned}
    h(j|11)-h(j|01)-\sqrt{2}(j+1|1)&=0,\\
    h(j|00)-h(j|10)-\sqrt{2}(j+1|0)&=0,\\
    h(j|11)-h(j|10)-\sqrt{2}(j|1)&=0,\\
    h(j|00)-h(j|01)-\sqrt{2}(j|0)&=0.
\end{aligned}
\label{lemma2_n=2_s1}
\end{equation}
Solving these four equations yields
\begin{equation}
    (j|11)=(j|00),\quad
    (j|10)=(j|01),\quad
    (j|0)=(j+1|1).
\label{lemma2_n=2_1}
\end{equation}
Eq.~\eqref{specific_constraints_0} also gives
\begin{equation}
    h(j+1|11)-h(j+1|01)-\sqrt{2}(j+2|1)=0.
\end{equation}
Together with Eq.~\eqref{lemma2_n=2_s1}, this implies
$(j+1|11)=(j|11)+c$ and $(j+1|01)=(j|01)+c$, for some $c\in\mathbb{C}$. The remaining translations of length-$2$ domains satisfy analogous relations. We now show that $c=0$. Eq.~\eqref{specific_constraints_0} gives
\begin{align}
    h(j|111)-h(j|110)-\sqrt{2}(j|11)&=0,
    \label{lemma2_n=2_s2}\\
    h(j|111)-h(j|011)-\sqrt{2}(j+1|11)&=0,
    \label{lemma2_n=2_s3}\\
    h(j|000)-h(j|001)-\sqrt{2}(j|00)&=0,
    \label{lemma2_n=2_s4}\\
    h(j|000)-h(j|100)-\sqrt{2}(j+1|00)&=0.
    \label{lemma2_n=2_s5}
\end{align}
Subtracting Eq.~\eqref{lemma2_n=2_s3} from
Eq.~\eqref{lemma2_n=2_s2} and using
$(j+1|11)=(j|11)+c$, we obtain
$(j|011)=(j|110)+c$.
Similarly, subtracting Eq.~\eqref{lemma2_n=2_s5} from
Eq.~\eqref{lemma2_n=2_s4} gives
$(j|001)=(j|100)-c$.
Eqs~\eqref{lemma0_2} and \eqref{lemma0_3} identify the coefficients on
the left- and right-hand sides of these two relations, respectively.
Therefore, $c=0$. Increasing $k$ and performing the above process repeatedly, it is easy to prove that
\begin{equation}
\begin{aligned}
    &(j|11)=(j|00)=(k|11)=(k|00),\\
    &(j|10)=(j|01)=(k|10)=(k|01).
\end{aligned}
\label{lemma2_n=2_2}
\end{equation}
for all $j,k=2,3,\ldots,N-2$.

Next, we assume that the lemma holds for $n<m$ and prove that it also holds
for $n=m$. We first consider domains at a fixed position $j$.
Suppose that the $m$-domain contains $r$ type-II bonds, with $r<m$.
Since Eq.~\eqref{lemma_domain} has already been established, for $r>0$ it
suffices to prove that the coefficients corresponding to the endpoint pairs
$(1,1)$ and $(0,0)$ are equal, and likewise that those corresponding to
$(0,1)$ and $(1,0)$ are equal.

According to Eq.~\eqref{specific_constraints_0}, we have
\begin{widetext}
\begin{align}
    &(j|1\;\underbrace{01010\cdots}_{\substack{\frac{r-1}{2}\ \text{copies} \\ \text{of ``01''}}}
    \underbrace{00\cdots0}_{\substack{m-r\ \\\text{copies of ``0''}}})
    -(j|0\;\underbrace{01010\cdots}_{\substack{\frac{r-1}{2}\ \text{copies} \\ \text{of ``01''}}}
    \underbrace{00\cdots0}_{\substack{m-r\ \\\text{copies of ``0''}}})
    +\frac{\sqrt{2}}{h}(j{+}1|
    \underbrace{01010\cdots}_{\substack{\frac{r-1}{2}\ \text{copies} \\ \text{of ``01''}}}
    \underbrace{00\cdots0}_{\substack{m-r\ \\\text{copies of ``0''}}})=0,
    \label{lemma2_s1}\\
    &(j|\underbrace{00\cdots0}_{\substack{m-r\ \\\text{copies of ``0''}}}
    \underbrace{1010\cdots}_{\substack{\frac{r-1}{2}\ \text{copies} \\ \text{of ``10''}}}\;1)
    -(j|\underbrace{00\cdots0}_{\substack{m-r\ \\\text{copies of ``0''}}}
    \underbrace{1010\cdots}_{\substack{\frac{r-1}{2}\ \text{copies} \\ \text{of ``10''}}}\;0)
    +\frac{\sqrt{2}}{h}(j|
    \underbrace{00\cdots}_{\substack{m-r\ \text{copies} \\ \text{of ``0''}}}
    \underbrace{1010\cdots}_{\substack{\frac{r-1}{2}\ \\\text{copies of ``10''}}})=0.
    \label{lemma2_s2}
\end{align}
\end{widetext}
By Eq.~\eqref{lemma_domain}, the second terms in these two expressions are
equal. Since the lemma holds for domains of length $m-1$, the third terms
are also equal. The first terms must therefore be equal, which proves the
claim for the endpoint pairs $(0,1)$ and $(1,0)$. The proof for the endpoint
pairs $(0,0)$ and $(1,1)$ when $r>0$, as well as the case $r=0$, proceeds
analogously. By mathematical induction, Lemma~\ref{lemma2} holds for single-domain states with domains at the same position.

We next show that Lemma~\ref{lemma2} also holds for single-domain states at
different positions. Following the argument leading to
Eq.~\eqref{lemma2_n=2_2}, one can show that, for any
$j,k=2,3,\ldots,N-m$,
\begin{equation}
\begin{aligned}
    &(j \mid \underbrace{000\cdots 0}_{m\ \text{copies of ``0''}})
    =
    (k \mid \underbrace{000\cdots 0}_{m\ \text{copies of ``0''}}),\\
    &(j \mid \underbrace{111\cdots 1}_{m\ \text{copies of ``1''}})
    =
    (k \mid \underbrace{111\cdots 1}_{m\ \text{copies of ``1''}}).
\end{aligned}
\label{lemma2_s3}
\end{equation}
Using Eq.~\eqref{specific_constraints_0}, we obtain
\begin{widetext}
\begin{align}
    &(j\mid \underbrace{00\cdots0}_{\substack{m\text{ copies}\\ \text{of ``0''}}})
    -(j\mid \underbrace{00\cdots0}_{\substack{m-1\text{ copies}\\ \text{of ``0''}}}1)
    -\frac{\sqrt{2}}{h}(j\mid
    \underbrace{00\cdots0}_{\substack{m-1\text{ copies}\\ \text{of ``0''}}})=0,
    \label{lemma2_s4}\\
    &(k\mid \underbrace{00\cdots0}_{\substack{m\text{ copies}\\ \text{of ``0''}}})
    -(k\mid \underbrace{00\cdots0}_{\substack{m-1\text{ copies}\\ \text{of ``0''}}}1)
    -\frac{\sqrt{2}}{h}(k\mid
    \underbrace{00\cdots0}_{\substack{m-1\text{ copies}\\ \text{of ``0''}}})=0.
    \label{lemma2_s5}
\end{align}
\end{widetext}
The first terms in Eqs.~\eqref{lemma2_s4} and \eqref{lemma2_s5} are equal by
Eq.~\eqref{lemma2_s3}, while the third terms are equal by the induction
hypothesis. Hence, the second terms are also equal. Substituting this result
back into Eq.~\eqref{specific_constraints_0} and repeatedly applying the same
argument, one can obtain
\begin{equation}
    (j\mid \underbrace{00\cdots1}_{\substack{r\text{ type-II bonds}}})
    =
    (k\mid \underbrace{00\cdots1}_{\substack{r\text{ type-II bonds}}}).
\label{lemma2_s6}
\end{equation}
The cases with other endpoints pairs follow in the same way. This completes
the proof of Lemma~\ref{lemma2} for single-domain states.

Since the above argument does not involve any sites to the left of the domain under consideration, it therefore extends directly to the case in which additional,
identical domains appear to the left of the domain.

\end{proof}

Lemma~\ref{lemma2} strongly suggests that the coefficient of a basis state
in the expansion of $\ket{\Psi_T}$ depends only on the structures of its
domains. We therefore introduce the following notation.

Consider an arbitrary basis state containing $k$ domains. From left to
right, let their starting positions be $x_1,x_2,\ldots,x_k$. The numbers of
type-I and type-II bonds in the $i$th domain are denoted by $a_i$ and $b_i$,
respectively, and its length by $l_i$. We first introduce the notation for
the case in which all domains are bulk domains, and subsequently extend it
to boundary domains. The coefficient of this basis state is denoted by
\begin{equation}
\bigl(
x_1,x_2,\ldots,x_k
\mid
\{a_1,b_1\},\ldots,\{a_k,b_k\}
\bigr).
\label{domain_coeff_notation}
\end{equation}
In particular, if $x_1,x_2,\ldots,x_k\in\varnothing$, then
\[
(x_1,x_2,\ldots,x_k\mid\{a_1,b_1\},\ldots,\{a_k,b_k\})=1.
\]

Using this notation, we establish the following lemma.

\begin{lemma}
Consider an arbitrary basis state appearing in the expansion of
$\ket{\Psi_T}$, with coefficient
\[
\bigl(
x_1,x_2,\ldots,x_k
\mid
\{a_1,b_1\},\ldots,\{a_k,b_k\}
\bigr).
\]
Using $\lambda_{\rm I}$, $\lambda_{\rm II}$, $c_{b}$ and $c_{\pm}$ defined by Eq.~\eqref{bond_weight} and \eqref{domain_factors}, the following factorization relations hold.

\text{(1)} If the rightmost domain is a right-boundary domain, then
\begin{equation}
\begin{aligned}
&
\bigl(
x_1,\ldots,x_k
\mid
\{a_1,b_1\},\ldots,\{a_k,b_k\}
\bigr)
\\
& =
2c_{\rm b}\lambda_{\rm I}^{a_k}\lambda_{\rm II}^{b_k}
\bigl(
x_1,\ldots,x_{k-1}
\mid
\{a_1,b_1\},\ldots,\{a_{k-1},b_{k-1}\}
\bigr).
\end{aligned}
\label{lemma3_1}
\end{equation}

\text{(2)} If the rightmost domain is a bulk domain, then
\begin{equation}
\begin{aligned}
&
\bigl(
x_1,\ldots,x_k
\mid
\{a_1,b_1\},\ldots,\{a_k,b_k\}
\bigr)
\\
& =
c_{\rm b}\lambda_{\rm I}^{a_k}\lambda_{\rm II}^{b_k}
\bigl(
x_1,\ldots,x_{k-1}
\mid
\{a_1,b_1\},\ldots,\{a_{k-1},b_{k-1}\}
\bigr).
\end{aligned}
\label{lemma3_2}
\end{equation}

\text{(3)} If the leftmost domain is a left-boundary domain, then
\begin{equation}
\begin{aligned}
&
\bigl(
x_1,\ldots,x_k
\mid
\{a_1,b_1\},\ldots,\{a_k,b_k\}
\bigr)
\\
& =
c_{\pm}\lambda_{\rm I}^{a_1}\lambda_{\rm II}^{b_1}
\bigl(
x_2,\ldots,x_k
\mid
\{a_2,b_2\},\ldots,\{a_k,b_k\}
\bigr).
\end{aligned}
\label{lemma3_3}
\end{equation}

\rm{(4)} If the basis state only has a length-$N$ domain, then 
\begin{equation}
\bigl(
1
\mid
\{a_1,b_1\}
\bigr)
 =
2c_{\pm}\lambda_{\rm I}^{a_1}\lambda_{\rm II}^{b_1}.
\label{lemma3_4}
\end{equation}
Here, $c_{\pm}=c_+$ for a left-boundary domain beginning with $\ket{1}$,
whereas $c_{\pm}=c_-$ for one beginning with $\ket{0}$.

\label{lemma3}
\end{lemma}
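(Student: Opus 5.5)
The plan is to prove the four factorization relations by induction on the number of non-$t$ sites, peeling off the rightmost (or leftmost) domain. The key observation is that the bulk relations \eqref{specific_constraints_0} only fix \emph{differences} of coefficients. The absolute size of a domain weight must therefore be anchored at a boundary, through the inhomogeneous relations \eqref{specific_constraints_boundary}. So I will prove (1) first, then derive (2) from (1) combined with the bulk relations and Lemma~\ref{lemma2}, and finally treat (3) and (4) symmetrically at the left end.

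\emph{Step 1: right-boundary domains, relation (1).} A right-boundary domain must end in $\ket{1_N}$, since $\ket{0_N}$ is forbidden. The second relation in \eqref{specific_constraints_boundary} gives
$\mathrm{Coeff}(\dots\alpha_{N-1}1_N)=\tfrac{\zeta_R^\alpha}{\sqrt2 h}\mathrm{Coeff}(\dots\alpha_{N-1}t_N)$.
From \eqref{parameter_boundary} the prefactors are $\zeta_R^{1}/(\sqrt2h)=2\lambda_{\rm I}$, $\zeta_R^{0}/(\sqrt2h)=2\lambda_{\rm II}$ and $\zeta_R^{t}/(\sqrt2h)=2c_{\rm b}$.
\begin{itemize}
\item For a length-1 domain ($\alpha=t$) this is exactly (1).
\item For a longer domain, replacing $1_N$ by $t_N$ removes one bond, of type I or II according to $\alpha$, and turns the domain into a bulk domain of length $l-1$ ending at site $N-1$. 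Relation (2) for that shorter domain then gives $2\lambda\cdot c_{\rm b}\lambda_{\rm I}^{a'}\lambda_{\rm II}^{b'}$, which is (1).
\end{itemize}
Hence (1) at length $l$ follows from (2) at length $l-1$. The induction therefore interleaves (1) and (2) in the domain length.

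\emph{Step 2: bulk domains, relation (2).} By Lemma~\ref{lemma2}, a bulk rightmost domain's coefficient depends only on $(a_k,b_k)$, not on its position or internal arrangement. I may therefore slide it so that it ends at site $N-1$, with $t_N$ to its right. I then use the bulk relation \eqref{specific_constraints_0} centred on site $N$'s neighbour, i.e.\ the $\beta=1$ case allowed up to $j=N-1$. This relation links a domain of length $l$ ending at $N-1$ followed by $1_N$ to shorter configurations. The configurations containing $1_N$ are right-boundary domains, so Step 1 applies to them with lengths at most $l$ but already known (length $\le l$, with the bulk piece of length $\le l-1$). The result is one inhomogeneous linear equation for the unknown bulk coefficient.

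I checked the length-1 case explicitly. Lemma~\ref{lemma2} (via \eqref{lemma0_1}) gives $\mathrm{Coeff}(t0_{N-1}t)=\mathrm{Coeff}(t1_{N-1}t)=X$. The relation
$\mathrm{Coeff}(t0_{N-1}1)-\mathrm{Coeff}(t1_{N-1}1)=\tfrac{\zeta^t_1}{\sqrt2h}\mathrm{Coeff}(tt1)$,
together with Step 1, becomes $2X(\lambda_{\rm II}-\lambda_{\rm I})=-\tfrac{2}{\sqrt2h}\,2c_{\rm b}$. Since $\lambda_{\rm I}-\lambda_{\rm II}=\sqrt2/h$, this gives $X=c_{\rm b}$. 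In general the difference $\lambda_{\rm I}-\lambda_{\rm II}=\sqrt2/h$ makes the bulk relations consistent with the ansatz $c_{\rm b}\lambda_{\rm I}^{a}\lambda_{\rm II}^{b}$: each equation reduces to $(\lambda_{\rm I}-\lambda_{\rm II})\cdot(\text{shorter})=\tfrac{\zeta}{\sqrt2h}(\text{shorter})$ with $\zeta=\pm2$ or $\pm4$. This check, combined with the one anchoring equation, fixes the coefficient uniquely.

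\emph{Step 3: left-boundary and length-$N$ domains, relations (3) and (4).} The first relation in \eqref{specific_constraints_boundary} plays the anchoring role at the left end. For a length-1 left-boundary domain it gives $\mathrm{Coeff}(1_1t\dots)-\mathrm{Coeff}(0_1t\dots)=\tfrac{2h_L}{\sqrt2h}\mathrm{Coeff}(t_1t\dots)$. Combined with the bulk relation at site 2 (neighbours $1_1$ or $0_1$), this fixes the two coefficients separately as $c_\pm$ times the rest. Longer left-boundary domains are handled by induction using the left relation with $\alpha=0,1$, whose prefactors are $(2h_L\mp2)/(\sqrt2h)$, together with the bulk relations, exactly as in Step 2. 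For (4), the length-$N$ domain is both left- and right-boundary. Applying the right-boundary reduction of Step 1 produces the factor $2\lambda$ and a left-boundary domain of length $N-1$, to which (3) applies; this yields $2c_\pm\lambda_{\rm I}^{a_1}\lambda_{\rm II}^{b_1}$.

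The main obstacle is justifying that the other domains only contribute a common multiplicative factor. All relations \eqref{specific_constraints_0}--\eqref{specific_constraints_boundary} involve at most three adjacent sites, and domains are separated by at least one $t$. Each manipulation above therefore acts only on the domain being peeled and leaves the remaining configuration unchanged, as noted at the end of the proof of Lemma~\ref{lemma2}. Still, one must check that the anchoring equation in Step 2 never needs to move the rightmost domain across another domain. I would handle this by always sliding toward the right boundary, which is free because the domain is rightmost, and symmetrically toward site 1 in Step 3, which is free because the domain is leftmost. A secondary obstacle is uniqueness: the homogeneous part of each anchoring equation has coefficient $\lambda_{\rm I}-\lambda_{\rm II}\neq0$ (or $c_\pm$-type combinations built from $h_L\pm1$), so the solution is determined.
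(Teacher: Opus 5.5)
\paragraph{The gap: Step 2 does not close for $l\ge2$.}
Slide the bulk domain so that it ends at site $N-1$. Write $D_{\alpha b}$ for the coefficient with $(S_{N-2},S_{N-1},S_N)=(\alpha,b,t)$, and $D'$ for the one with $(\alpha,t,t)$, i.e.\ the length-$(l-1)$ domain. After the right-boundary relation is applied to its first two terms, your anchoring relation (Eq.~\eqref{specific_constraints_0} at $j=N-1$ with $S_N=1$) reads $2\lambda_{\rm II}D_{\alpha0}-2\lambda_{\rm I}D_{\alpha1}=\frac{\zeta^\alpha_1}{\sqrt{2}\,h}\,2c_{\rm b}D'$. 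Three things go wrong with treating this as the determining equation:
\begin{itemize}
\item Its $1_N$ terms are right-boundary domains of length $l+1$, not already-known ones of length $\le l$.
\item They contain two \emph{different} length-$l$ unknowns. The bonds $(\alpha,0)$ and $(\alpha,1)$ have different types, so Lemma~\ref{lemma2} cannot identify $D_{\alpha0}$ with $D_{\alpha1}$.
\item For $\alpha=0$ the equation is homogeneous, since $\zeta^0_1=0$; it only says $\lambda_{\rm II}D_{00}=\lambda_{\rm I}D_{01}$.
\end{itemize}
Only at $l=1$, where $\alpha=t$ and $D_{t0}=D_{t1}$ by \eqref{lemma0_1}, is it one inhomogeneous equation in one unknown. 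That is why your length-one check works.

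Checking that $c_{\rm b}\lambda_{\rm I}^{a}\lambda_{\rm II}^{b}$ satisfies the bulk relations does not fill the hole. It shows the ansatz is a solution, not that the relations force it. (The $\zeta=\pm4$ relations also do not reduce to $(\lambda_{\rm I}-\lambda_{\rm II})\times(\text{shorter})$: they split a domain in two and additionally need $\lambda_{\rm I}+\lambda_{\rm II}=2c_{\rm b}$.)

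The fix is to use a second equation, the $S_N=t$ relation at the same site:
$D_{\alpha0}-D_{\alpha1}=\frac{\zeta^\alpha_t}{\sqrt{2}\,h}D'$,
with $D'$ known by induction and translation. The resulting $2\times2$ system has determinant $\lambda_{\rm I}-\lambda_{\rm II}=\sqrt{2}/h\neq0$. Its unique solution is $D_{\alpha b}=\lambda_{\rm I}D'$ or $\lambda_{\rm II}D'$ according to the type of the bond $(\alpha,b)$; the case $\alpha=1$ uses $\lambda_{\rm I}+\lambda_{\rm II}=2c_{\rm b}$.

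\paragraph{Comparison with the paper, and Step 3.}
This second relation is exactly what the paper uses in \eqref{lemma3_s12}. The paper's anchor differs from yours. It applies \eqref{specific_constraints_0} at the \emph{left} end $j=N-m$ of a length-$(m+1)$ right-boundary domain, Eq.~\eqref{lemma3_s7}. Lemma~\ref{lemma2} then collapses both length-$(m+1)$ terms onto a single all-type-II bulk unknown, which gives the factor $\tfrac12$ in \eqref{lemma3_s10}. The paper then descends in the number of type-II bonds using the $t$-neighbour relation at the domain's right end. Once repaired, your route puts both equations at the single site $N-1$. It treats every bond content and both parities of $l$ at once, and it needs Lemma~\ref{lemma2} only to translate bulk domains, not to rearrange right-boundary ones. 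That is a mild simplification.

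Your Step 3 is also underspecified for left-boundary domains with $2\le l\le N-2$. There is no ratio relation at site $1$, and the left relation fixes only $\mathrm{Coeff}(1\cdots)-\mathrm{Coeff}(0\cdots)$, so ``exactly as in Step 2'' does not literally apply. The paper simply verifies the $c_\pm$ form: dividing by $c_\pm$ and multiplying by $c_{\rm b}$ turns each relation into an already proved bulk one. If you want a derivation instead, generalize your length-one argument:
\begin{itemize}
\item Subtract the $t$-neighbour relations at site $l+1$ for $S_l=0$ and $S_l=1$, using $\zeta^0_t=-\zeta^1_t$.
\item Evaluate the resulting left side with the flip relations at site $l$.
\item This fixes the sum $\mathrm{Coeff}(x'0\,t\cdots)+\mathrm{Coeff}(x'1\,t\cdots)$; together with their known difference, it determines both coefficients.
\end{itemize}
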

\begin{proof}
We first prove statements (1) and (2). We begin with a single-domain state,
corresponding to $k=1$, and denote the domain length by $l$.
When $l=1$, Eqs.~\eqref{specific_constraints_0} and
\eqref{specific_constraints_boundary} give
\begin{align}
    &(N-1|01)=2\lambda_{\rm II} (N-1|0),\label{lemma3_s1}\\
    &(N-1|11)=2\lambda_{\rm I} (N-1|1),\label{lemma3_s2}\\
    &h(N-1|01)-h(N-1|11)+\sqrt{2} (N|1)=0,\label{lemma3_s3}\\
    &(N|1)=2c_{\rm b}\;\text{Coeff}(tt\cdots t).\label{lemma3_s4}
\end{align}
Noting that $(N-1|0)=(N-1|1)$ by Eq.~\eqref{lemma0_1}, and substituting
Eqs.~\eqref{lemma3_s1} and \eqref{lemma3_s2} into
Eq.~\eqref{lemma3_s3}, we obtain
\begin{equation}
\begin{aligned}
    &(N-1|01)=\lambda_{\rm II}(N|1),\\
    &(N-1|11)=\lambda_{\rm I}(N|1),\\
    &(N-1|1)=\frac{1}{2}(N|1).
\end{aligned}
\end{equation}
Finally, using Eq.~\eqref{lemma3_s4} together with
$\text{Coeff}(tt\cdots t)=1$, and noting from Lemma~\ref{lemma2} that
translating a bulk domain does not change the coefficient of the
corresponding basis state, we complete the proof for $k=1$ and $l=1$.

The case $k=1$ and $l>1$ can be proved by mathematical induction. Assume
that Lemma~\ref{lemma3} holds for all single-domain states with $l<m$. We
now prove it for $l=m$. For notational simplicity, we take $m$ to be odd.
When the domain contains the maximal number of type-II bonds,
$b_k=m-1$, Eqs.~\eqref{specific_constraints_0} and
\eqref{specific_constraints_boundary} give us
\begin{widetext}
\begin{align}
    &(N-m|\underbrace{0101\cdots01}_{\substack{\frac{m+1}{2} \text{ copies} \\ \text{of ``01''}}})=2\lambda_{\rm II} (N-m|\underbrace{0101\cdots01}_{\substack{\frac{m-1}{2} \text{ copies} \\\text{ of ``01''}}}0),\label{lemma3_s5}\\
    &(N-m|\underbrace{1010\cdots10}_{\substack{\frac{m-1}{2} \text{ copies}\\\text{of ``10''}}}11)=2\lambda_{\rm I} (N-m|\underbrace{1010\cdots10}_{\substack{\frac{m-1}{2} \text{ copies}\\\text{of ``10''}}}1),\label{lemma3_s6}\\
    &(N-m|\underbrace{0101\cdots01}_{\substack{\frac{m+1}{2} \text{ copies} \\\text{of ``01''}}})
    -(N-m|11\underbrace{0101\cdots01}_{\substack{\frac{m-1}{2} \text{ copies} \\\text{of ``01''}}})
    +\frac{\sqrt{2}}{h} (N-m+1|1\underbrace{0101\cdots01}_{\substack{\frac{m-1}{2} \text{ copies} \\\text{of ``01''}}})=0.\label{lemma3_s7}
\end{align}
\end{widetext}
And Lemma~\ref{lemma2} gives
\begin{align}
    &(N-m|\underbrace{1010\cdots10}_{\substack{\frac{m-1}{2}\ \text{ copies} \\\text{of ``10''}}}11)=(N-m|11\underbrace{0101\cdots01}_{\substack{\frac{m-1}{2}\ \text{ copies} \\\text{of ``01''}}}),\label{lemma3_s8}\\
    &(N-m|\underbrace{0101\cdots01}_{\substack{\frac{m-1}{2}\ \text{ copies} \\\text{of ``01''}}}0)=(N-m|\underbrace{1010\cdots10}_{\substack{\frac{m-1}{2}\ \text{ copies} \\\text{of ``10''}}}1).\label{lemma3_s9}
\end{align}
Substituting Eqs.~\eqref{lemma3_s5}, \eqref{lemma3_s6},
\eqref{lemma3_s8}, and \eqref{lemma3_s9} into
Eq.~\eqref{lemma3_s7}, we obtain the key intermediate result
\begin{equation}
    (N-m|\underbrace{1010\cdots10}_{\substack{\frac{m-1}{2}\ \text{copies} \\\text{of ``10''}}}1)=\frac{1}{2}(N-m+1|\underbrace{1010\cdots10}_{\substack{\frac{m-1}{2}\ \text{ copies} \\\text{of ``10''}}}1).\label{lemma3_s10}
\end{equation}
Analogously to Eqs.~\eqref{lemma3_s5} and \eqref{lemma3_s6},
Eq.~\eqref{specific_constraints_boundary} relates the coefficient of a
right-boundary domain of length $m$ to that of a bulk domain of length
$m-1$:
\begin{equation}
    (N-m+1|\underbrace{1010\cdots10}_{\substack{\frac{m-1}{2} \text{ copies} \\ \text{of ``10''}}}1)=2\lambda_{\rm II} (N-m+1|\underbrace{1010\cdots10}_{\substack{\frac{m-1}{2} \text{ copies} \\\text{ of ``10''}}}).
    \label{lemma3_s11}
\end{equation}
By the induction hypothesis, the coefficient of the single-domain state of
length $m-1$ is given by Eq.~\eqref{lemma3_2}. Therefore,
\[(N-m+1|\underbrace{1010\cdots10}_{\substack{\frac{m-1}{2} \text{ copies} \\\text{ of ``01''}}})=c_{\rm b}\lambda_{\rm II}^{m-2}.\]
Substituting this result into Eqs.~\eqref{lemma3_s10} and
\eqref{lemma3_s11}, we obtain
\begin{align}
    &(N-m+1|\underbrace{1010\cdots10}_{\substack{\frac{m-1}{2} \text{ copies} \\\text{ of ``01''}}}1)=2c_{\rm b}\lambda_{\rm II}^{m-1},\\
    &(N-m|\underbrace{1010\cdots10}_{\substack{\frac{m-1}{2}\ \text{copies} \\\text{of ``10''}}}1)=c_{\rm b}\lambda_{\rm II}^{m-1}.
\end{align}
Together with Lemma~\ref{lemma2}, these results complete the proof for $k=1$, $b_k=m-1$, and odd $m$.

For the case $b_k<m-1$, the result can be established by repeatedly using
the relations in Eq.~\eqref{specific_constraints_0}. For example, the
coefficients of single bulk-domain states with $b_k=m-1$ and $b_k=m-2$
satisfy
\begin{equation}
\begin{aligned}
    &(N-m|\underbrace{1010\cdots10}_{\substack{\frac{m-1}{2}\ \text{copies} \\\text{of ``10''}}}1)
    -(N-m|\underbrace{1010\cdots10}_{\substack{\frac{m-1}{2}\ \text{copies} \\\text{of ``10''}}}0)\\
    &+\quad\frac{\sqrt{2}}{h}(N-m|\underbrace{1010\cdots10}_{\substack{\frac{m-1}{2}\ \text{copies} \\\text{of ``10''}}})=0.
\end{aligned}
\label{lemma3_s12}
\end{equation}
From Eq.~\eqref{lemma3_s12}, we obtain
\begin{equation}
    (N-m|\underbrace{1010\cdots10}_{\substack{\frac{m-1}{2}\ \text{copies} \\\text{of ``10''}}}0)=c_{\rm b}\lambda_{\rm I}\lambda_{\rm II}^{m-2}.
\end{equation}
The corresponding result for a right-boundary domain follows from an
argument analogous to that leading to Eq.~\eqref{lemma3_s10}.
By repeatedly applying Eq.~\eqref{specific_constraints_0} together with
Lemma~\ref{lemma2}, one proves statements (1) and (2) of
Lemma~\ref{lemma3} for all single-domain states with odd $m$. The proof for
even $m$ proceeds in the same way. Moreover, since the above argument does
not involve any sites to the left of the domain under consideration, it
extends directly to configurations containing additional domains on its
left. This completes the proof of statements (1) and (2) of
Lemma~\ref{lemma3}.

It remains to verify statement (3) from the coefficient relations.
Consider an arbitrary basis state containing a single left-boundary domain,
and denote its coefficient by
$(1\mid S_1S_2\cdots S_l)$, where $l<N$. We introduce the complementary
state $\widetilde{S}_j$, defined by
$\widetilde{S}_j=1$ for $S_j=0$ and
$\widetilde{S}_j=0$ for $S_j=1$.
Consider any relation obtained from Eq.~\eqref{specific_constraints_0} that
involves $(1\mid S_1S_2\cdots S_l)$. Substituting the coefficients proposed
in statement (3) of Lemma~\ref{lemma3}, dividing the entire relation by
$c_+$ or $c_-$ according to the value of $S_1$, and multiplying it by
$c_{\rm b}$ reduces the relation exactly to the corresponding one for a
single bulk-domain state. Since statements (1) and (2) of
Lemma~\ref{lemma3} have already been proved, the resulting relation is
satisfied.
The left-boundary constraint in
Eq.~\eqref{specific_constraints_boundary} takes the form
\begin{equation}
    (1\mid S_1S_2\dots S_l)
    -(1\mid \widetilde{S}_1S_2\dots S_l)
    =\pm \frac{\xi^{S_2}_{L}}{\sqrt{2}h}
    (2\mid S_2\dots S_l),
\label{lemma3_s13}
\end{equation}
where
$\xi^{0}_{L}=2h_L-2$ and
$\xi^{1}_{L}=2h_L+2$. The positive sign applies for $S_1=1$, whereas the
negative sign applies for $S_1=0$. Substitution of statement (3) of
Lemma~\ref{lemma3} directly verifies Eq.~\eqref{lemma3_s13}. For the same
reason as in the preceding proofs, the argument extends directly to
configurations containing additional domains to the right of the
left-boundary domain.

We finally prove statement (4). Consider a basis state containing a single
domain of length $N$. Since $S_N=1$, Eq.~\eqref{specific_constraints_boundary} gives
\begin{equation}
    (1\mid S_1S_2\dots S_{N-1}1)
    =
    2\lambda_{\rm I/II}
    (1\mid S_1S_2\dots S_{N-1}),
\end{equation}
where $\lambda_{\rm I}$ applies for $S_{N-1}=1$, whereas
$\lambda_{\rm II}$ applies for $S_{N-1}=0$. Applying statement (3) of
Lemma~\ref{lemma3} to the coefficient on the right-hand side proves
statement (4).

\end{proof}

For the coefficient of an arbitrary basis state in $\ket{\Psi_T}$, repeated application of Lemma~\ref{lemma3} to the domains contained in that state directly yields Eq.~\eqref{coeff} in the main text.

\subsection{Extensions of the exact construction}\label{appB2}
The exact NESS can also be obtained within our framework for certain
modifications of the Lindbladian in Eq.~\eqref{model}, provided that HTRS is
preserved.

We first consider gain or loss acting at both boundaries with equal intensities.
The boundary longitudinal fields in this case must have the same sign when the two
dissipation are of the same type and opposite signs when they are of
different types. As an example, we take loss at the left boundary and gain
at the right boundary. The Lindbladian is
\begin{equation}
    \mathcal{L}
    =
    -i[H,\cdot]
    +\gamma\left(
    \mathcal{D}[\sigma^{-}_{1}]
    +\mathcal{D}[\sigma^{+}_{N}]
    \right),
\label{model_appB_1}
\end{equation}
where
\begin{equation}
    H
    =
    \sum^{N-1}_{j=1}\sigma^{z}_{j}\sigma^{z}_{j+1}
    +h\sum^{N}_{j=1}\sigma^{x}_{j}
    +h^{\prime}(\sigma^{z}_{1}-\sigma^{z}_{N}),
\end{equation}
and $\mathcal{D}[L]\rho=2L\rho L^{\dagger}-\left\{L^{\dagger}L,\rho\right\}$.
When the Lindbladian~\eqref{model_appB_1} satisfies HTRS, the
purification $\ket{\Psi_T}$ of its NESS is determined by
\begin{align}
    &H_{AB}\ket{\Psi_{T}}=0,
    \label{appB_hTRS_1}\\
    &(\sigma^{-}_{1,B}-\sigma^{-}_{1,A})\ket{\Psi_{T}}=0,
    \label{appB_hTRS_left}\\
    &(\sigma^{+}_{N,B}-\sigma^{+}_{N,A})\ket{\Psi_{T}}=0,
    \label{appB_hTRS_right}
\end{align}
where
\begin{equation}
H_{AB}
:=
H_A-H_B
-i\gamma\left(
\sigma^{+}_{1,A}\sigma^{-}_{1,B}
+\sigma^{-}_{N,A}\sigma^{+}_{N,B}
-\text{h.c.}
\right).
\label{appB_H_{AB}}
\end{equation}
Relative to the construction in Sec.~\ref{Sec3}, only the left-boundary
constraint is modified. The same derivation therefore applies, with
$S_1\in\{0,t\}$, $S_N\in\{1,t\}$, and
$S_i\in\{0,1,t\}$ for $1<i<N$. The coefficients retain the factorized form
of Eq.~\eqref{coeff}, with
\begin{equation}
    \lambda_{\rm I}
    =
    \frac{i\gamma-2h^{\prime}+2}{2\sqrt{2}h},
    \;
    \lambda_{\rm II}
    =
    \frac{i\gamma-2h^{\prime}-2}{2\sqrt{2}h},
    \;
    c_{\rm b}
    =
    \frac{i\gamma-2h^{\prime}}{2\sqrt{2}h}.
\label{appB_weights}
\end{equation}
The domain type factor is $c_j=c_{\rm b}$ for a bulk domain,
$c_j=2c_{\rm b}$ for a left- or right-boundary domain,
$c_j=4c_{\rm b}$ for a length-$N$ domain, and $c_j=1$ for the length-zero domain.

The above result, as well as the result in the main text,
extends straightforwardly to systems with spatially inhomogeneous couplings
and transverse fields, namely,
\begin{equation}
\begin{aligned}
    H=&
    \sum_{j=1}^{N-1}J_j\sigma_j^z\sigma_{j+1}^z
    +\sum_{j=1}^{N}h_j\sigma_j^x\\
    &+\text{boundary longitudinal fields}.    
\end{aligned}
\end{equation}
Taking Eq.~\eqref{model_appB_1} as an example, consider a basis state
$\ket{S_1S_2\cdots S_N}$ in the expansion of $\ket{\Psi_T}$ containing
$m$ domains. Let $x_k$ and $y_k$ denote the starting and
ending sites of the $k$th domain respectively. Its coefficient is then given by
\begin{equation}
    \mathrm{Coeff}(S_1S_2\cdots S_N)
    =
    \prod_{k=1}^{m}
    c_{k}^{(y_k)}
    \prod_{j=x_k}^{y_k-1}
    \lambda_{\rm I/II}^{(j)}.
\end{equation}
Here, $\lambda_{\rm I/II}^{(j)}=\lambda_{\rm I}^{(j)}$ if the bond
$(j,j+1)$ is of type I, and
$\lambda_{\rm I/II}^{(j)}=\lambda_{\rm II}^{(j)}$ if it is of type II.
The position-dependent quantities
$c_{k}^{(y_k)}$, $\lambda_{\rm I}^{(j)}$, and
$\lambda_{\rm II}^{(j)}$ are obtained directly from
Eq.~\eqref{appB_weights} by replacing $h$ and $2$ with
$h_j$ and $2J_j$, respectively.

\subsection{Integrable Structure in the Steady State}
\label{appB3}

Exact solutions of boundary-driven systems often admit MPO representations that encode an underlying Yang--Baxter structure~\cite{prosen_mps}.
It is therefore natural to ask whether the NESS obtained in the main
text exhibits a similar integrable structure. Here we use the MPO representation
\begin{equation}
    \rho_{\rm SS}
    =
    \frac{\Omega\Omega^\dagger}
    {\text{Tr}(\Omega\Omega^\dagger)} ,
\end{equation}
where $\Omega$ is obtained from $\ket{\Psi_T}$ by replacing
$\ket{0}$, $\ket{1}$, and $\ket{t}$ with
$\sigma^-$, $\sigma^+$, and $\mathbbm{1}/\sqrt{2}$,
respectively.

Since \(\text{rank}(K)=2\), the bulk tensor \(T^{[j]}\), with
\(3\leq j\leq N-2\), admits an exact reduction to a two-dimensional
auxiliary space. Introducing the additive spectral parameter
\(u=\frac{1}{2}\log({\rm i}\gamma+2h_R)\) and defining
\(x(u)=e^{2u}/(2h)\), we can write
\begin{equation}
    T^{[j]}(u)
    =
    \left[B(u)\otimes\mathbbm{1}_j\right]
    \frac{\widetilde{T}_{0j}(u)}{\sqrt{2}}
    \left[B(u)^{\rm T}\otimes\mathbbm{1}_j\right],
\label{eq:T_reduction}
\end{equation}
where \(B(u)^{\rm T}B(u)=\mathbbm{1}\), with
\begin{equation}
    B(u)=
\begin{bmatrix}
1/\sqrt{2}
&
\sqrt{\dfrac{x(u)}{\sqrt{2}[1+\sqrt{2}x(u)]}}
\\
0
&
1/\sqrt{1+\sqrt{2}x(u)}
\\
-1/\sqrt{2}
&
\sqrt{\dfrac{x(u)}{\sqrt{2}[1+\sqrt{2}x(u)]}}
\end{bmatrix}.
\label{eq:B_matrix}
\end{equation}
Here, the subscript \(0\) labels the auxiliary space, and
\begin{equation}
    \widetilde{T}_{0j}(u)
    =
    \begin{bmatrix}
        \sigma_j^x/h
        &
        {\rm i}\sqrt{x(u)/h}\,\sigma_j^y
        \\[1mm]
        {\rm i}\sqrt{x(u)/h}\,\sigma_j^y
        &
        \mathbbm{1}_j+x(u)\sigma_j^x
    \end{bmatrix}.
\label{eq:explicit_reduced_L}
\end{equation}
The reduced tensor then obeys the \(RLL\) relation
\begin{equation}
    \mathcal{R}_{0\bar{0}}(u-v)
    \widetilde{T}_{0j}(u)\widetilde{T}_{\bar{0}j}(v)
    =
    \widetilde{T}_{\bar{0}j}(v)\widetilde{T}_{0j}(u)
    \mathcal{R}_{0\bar{0}}(u-v),
\label{eq:RLL_reduced}
\end{equation}
where
\begin{equation}
    \mathcal{R}(u-v)
    =
    \begin{bmatrix}
        1&0&0&0\\
        0&\tanh(u-v)&\operatorname{sech}(u-v)&0\\
        0&\operatorname{sech}(u-v)&-\tanh(u-v)&0\\
        0&0&0&1
    \end{bmatrix}.
\label{eq:auxiliary_R}
\end{equation}
Interestingly, $\mathcal{R}(u)$ has a free-fermion six-vertex
structure, distinct from the conventional elliptic eight-vertex
structure.

\section{Steady-State Properties for $h_R<0$}
\label{appC}
We now extend the analysis of steady-state magnetization profiles and correlations in Sec.~\ref{Sec4} to \(h_R<0\).

We first focus on the generic case. In contrast to the discussion in the main text, \(h_R<0\) implies $|\lambda_{\rm I}|<|\lambda_{\rm II}|$,
so that type-II bonds are favored over type-I bonds on neighboring sites. A further distinction is that the two leading eigenvalues of the transfer matrix \(\mathbb{T}^{[j]}\), with \(3\leq j\leq N-2\), have opposite signs,
$
\operatorname{sgn}(\xi_1)\neq \operatorname{sgn}(\xi_2)
$.
It then follows directly from Eqs.~\eqref{skin_profile_estimate} and~\eqref{two_point_asymptotic} that, apart from an alternating odd-even sign structure, the steady-state magnetization profile and connected correlation function exhibit the same behavior as in the \(h_R>0\) case. This is illustrated in Fig.~\ref{fig_appC_1}.

\begin{figure}[t]
  \centering
  \includegraphics[width=0.48\textwidth]{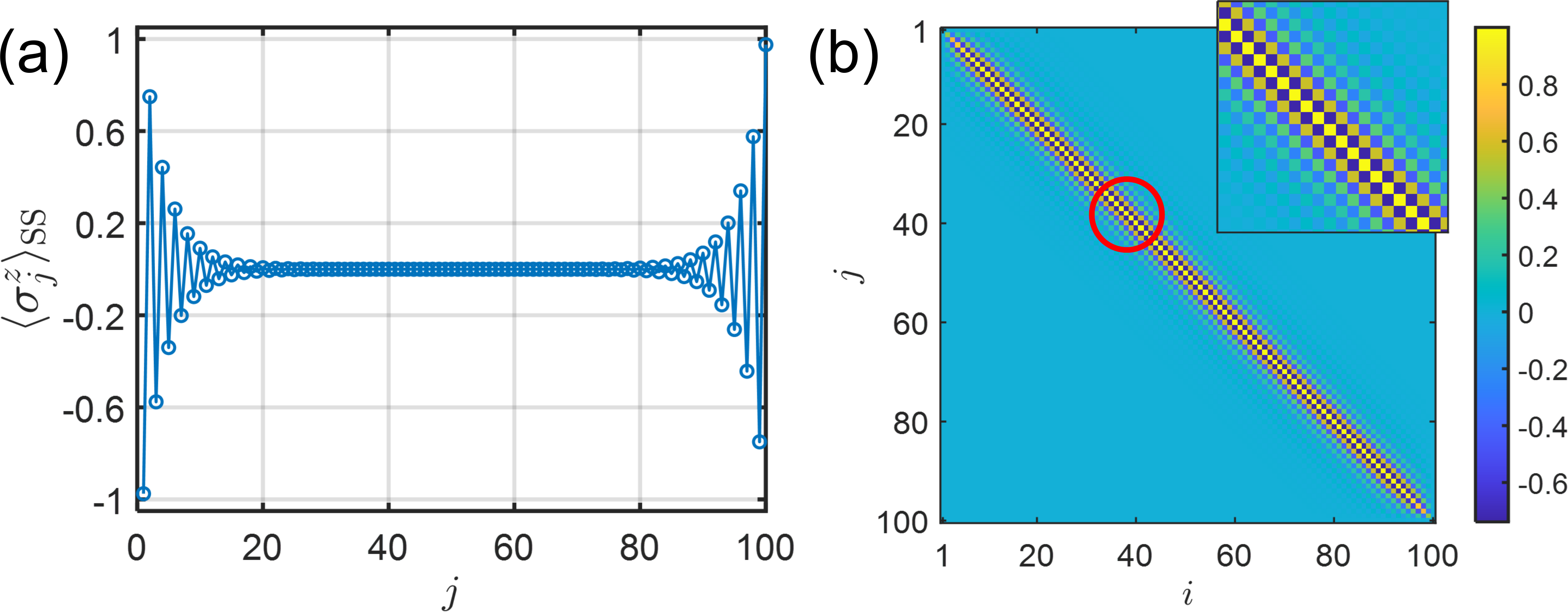}
  \caption{
   Steady-state observables in the generic case for \(h_R<0\). (a) Magnetization profile \(\langle\sigma_j^z\rangle_{\rm SS}\), exhibiting an alternating odd-even structure. (b) Connected correlation function \(C_{ij} = \langle\sigma_i^z\sigma_j^z\rangle_{\rm SS} - \langle\sigma_i^z\rangle_{\rm SS} \langle\sigma_j^z\rangle_{\rm SS}\). The inset shows an enlarged view of the region indicated by the red circle. The common parameters are \(N=100\), \(\gamma=0.1\), \(h=0.5\), and \(h_L=-h_R=2\).
  }
  \label{fig_appC_1}
\end{figure}

The analysis of the steady-state properties at $h_R=-1$ in the weak-driving limit proceeds analogously. In this case,
$|\lambda_{\rm I}|\ll |\lambda_{\rm II}|$, so the NESS overwhelmingly favors configurations dominated by type-II bonds. For $h_L\neq(-1)^{N+1}$, the thermofield state is dominated by the Néel-like configuration
$|\cdots010101\rangle$. Consequently, both the steady-state magnetization profile and the correlation functions exhibit the characteristic behavior of a Néel state, as shown in Figs.~\ref{fig_appC_2}(a) and \ref{fig_appC_2}(b).

If $h_L=(-1)^{N+1}$, the NESS is dominated by single-interface configurations, where both sides of the interface consist of type-II bonds,
\begin{equation}
\begin{aligned}
|\Psi_T\rangle &\propto
\sum_{j\,\text{ is even}}
\Bigl(
|1\cdots01\overset{j}{0}010\cdots1\rangle
+
|1\cdots101\overset{j}{t}010\cdots1\rangle
\Bigr)
\\
&+
\sum_{j\,\text{is odd}}
\Bigl(
|1\cdots010\overset{j}{1}101\cdots1\rangle
+
|1\cdots010\overset{j}{t}101\cdots1\rangle
\Bigr).
\end{aligned}
\end{equation}
Accordingly, both the magnetization profile and the correlation functions exhibit an alternating odd-even sign structure, as shown in Figs.~\ref{fig_appC_2}(c) and \ref{fig_appC_2}(d).

\begin{figure}[t]
  \centering
  \includegraphics[width=0.45\textwidth]{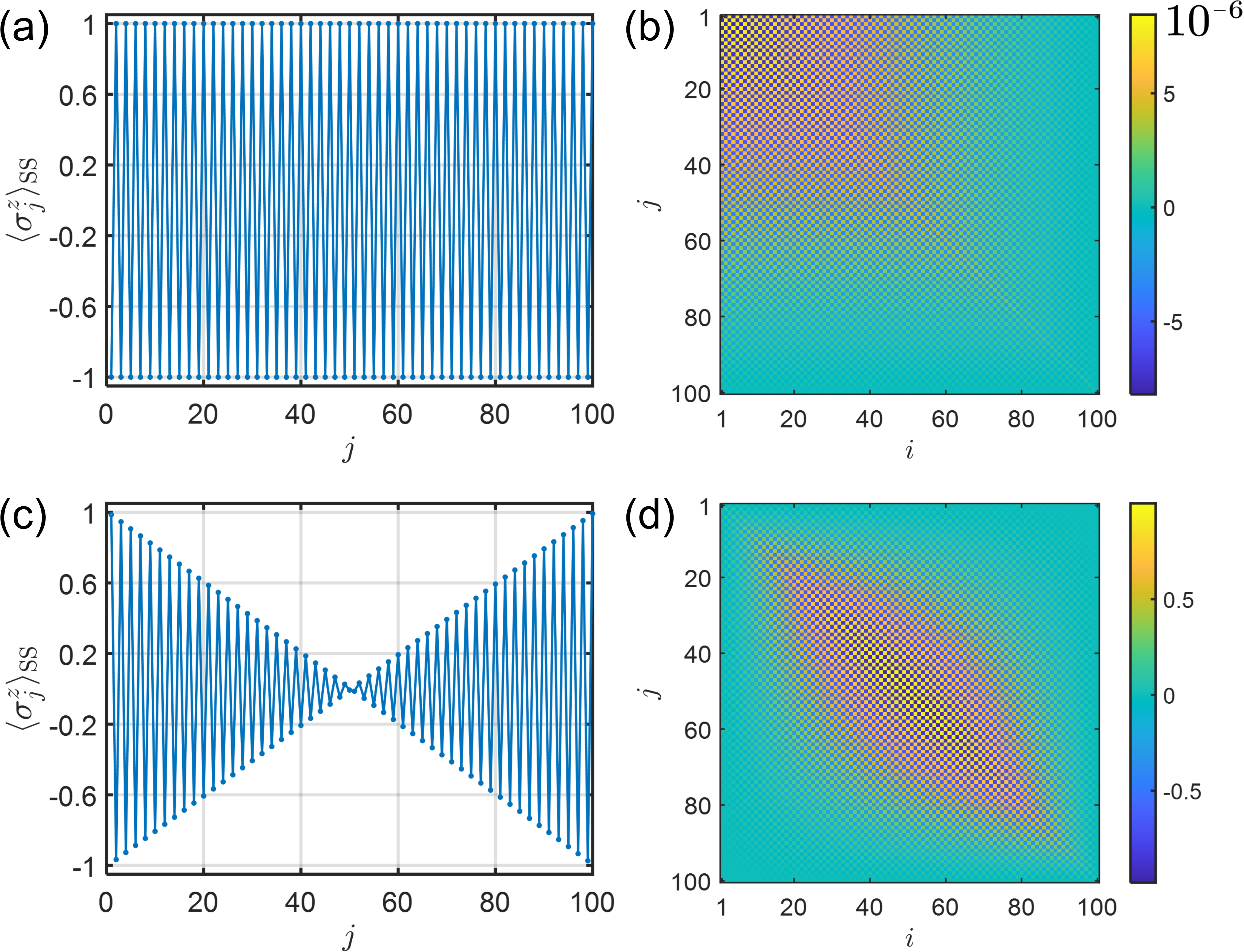}
  \caption{
Steady-state magnetization profiles and connected correlation functions in the weak-driving limit for $h_R=-1$.
(a) Magnetization profile for $h_L=2$, showing the Néel-like ordering.
(b) Connected correlation function for $h_L=2$. The small deviation from the ideal Néel-state behavior originates from subleading configurations induced by finite values of $h$ and $\gamma$.
(c) Magnetization profile for $h_L=-1$.
(d) Connected correlation function for $h_L=-1$. The common parameters are $N=100$ and $h=\gamma=0.001$.
  }
  \label{fig_appC_2}
\end{figure}

\section{Corrections in the Weak Driving Limit}
\label{appD}

The pictures discussed in Sec.~\ref{Sec4} become exact only in the limit of
infinitesimal $h$ and $\gamma$. At finite $h$ and $\gamma$, configurations
in $|\Psi_T\rangle$ containing additional interfaces acquire nonzero
amplitudes and modify the ideal behavior.

To quantify this effect, we denote by $W_n$ the total weight of the
$n$-interface sector in the nonequilibrium partition function
$\mathcal{Z}=\langle\Psi_T|\Psi_T\rangle$. Up to boundary-dependent
prefactors, it can be estimated as
\begin{equation}
    W_n
    \sim
    \mathcal{N}_n \varepsilon^n,
    \qquad
    \mathcal{N}_n\sim\binom{N-1}{n},
    \label{eq:Wn_scaling}
\end{equation}
where $\mathcal{N}_n$ counts the possible interface positions and
$\varepsilon$ denotes the characteristic relative weight associated with
one interface. Taking $h_R=1$ as an example, an interface can be represented
either by a type-II bond or by a local $\ket{t}$ state connecting two
type-I domains. Its characteristic relative weight can therefore be
estimated as
\begin{equation}
    \varepsilon
    \simeq
    \left|
    \frac{\lambda_{\rm II}}{\lambda_{\rm I}}
    \right|^2
    +
    \left|
    \frac{c_{\rm b}}{\lambda_{\rm I}^2}
    \right|^2
    \simeq
    \frac{\gamma^2}{16}
    +
    \frac{h^2}{8},
    \label{eq:interface_weight}
\end{equation}
in the weak-driving limit. Thus, although the weight of an individual multi-interface
configuration is exponentially suppressed with the number of interfaces,
the total number of such configurations grows combinatorially with $N$.

Specifically, for $h_L\neq-1$, the fully polarized configuration is
dominant. Once $h$ and $\gamma$ are finite, however, the subleading
single-interface sector acquires a nonzero weight. It produces a slight
deviation of the magnetization profile from perfect polarization and
generates weak long-range correlations. These corrections become
increasingly visible as $N$ grows.

For $h_L=-1$, the single-interface sector is dominant, and the leading
correction is provided by the two-interface sector. Their total-weight ratio
can be estimated as
\begin{equation}
    \frac{W_2}{W_1}
    \simeq
    \frac{\mathcal{N}_2}{\mathcal{N}_1}
    \varepsilon
    \simeq
    \frac{N\varepsilon}{2}.
\end{equation}
If the correction is regarded as appreciable when
$W_2/W_1\simeq \delta$, the corresponding crossover size is
\begin{equation}
    N_{c}
    \simeq
    \frac{2\delta}
    {\gamma^2/16+h^2/8}.
    \label{eq:Nc}
\end{equation}
For $\delta=0.1$ and the parameters used in the main text,
$h=\gamma=0.001$, this estimate gives $N_{c}\simeq 1\times10^6$.
Thus, for $h_R=\pm 1$ and $h_L=-1$, the conclusions drawn in the main text remain valid
either for system sizes below the crossover scale, $N<N_{\rm c}$, or,
equivalently, over a broad range of small but finite $h$ and $\gamma$ at
fixed $N$.

Overall, within the weak-driving regime, any finite values of $h$ and
$\gamma$ render multi-interface configurations increasingly important as
$N$ grows. In the limit $N\rightarrow\infty$, the NESS gradually recovers
the generic case behavior, unless $\gamma$ and $h$ are required to be infinitesimally small. This mechanism parallels that of a classical Ising chain with a longitudinal
field: the single-kink configuration dominates at zero temperature, whereas
finite temperature gives exponentially small weights to multi-kink
configurations whose growing number eventually destroys this picture as
$N$ increases. Finite $h$ and $\gamma$ here play a role similar to finite temperature.

\bibliographystyle{apsrev4-2-title}
\bibliography{refs_prb_titles.bib}

\end{document}